\documentclass[11pt,a4paper]{article}
\usepackage[american]{babel}
\usepackage{amsmath,amssymb,amsthm}
\usepackage{newtxtext,newtxmath}
\usepackage{microtype}
\usepackage{graphicx}
\usepackage{booktabs}
\usepackage{longtable}
\usepackage{array}
\usepackage{calc}
\usepackage{etoolbox}
\usepackage{url}
\usepackage{cmap}
\usepackage{caption}
\newlength{\TabCapGap}
\newcommand{\TABCAP}[1]{\par{\centering\small\textbf{#1}\par}\vspace*{-\TabCapGap}}
\usepackage{xcolor}
\usepackage{needspace}
\usepackage{placeins}
\usepackage[numbers,sort&compress]{natbib}
\usepackage{geometry}
\AtBeginEnvironment{longtable}{\scriptsize\setlength{\tabcolsep}{4.5pt}}
\AtBeginEnvironment{tabular}{\scriptsize\setlength{\tabcolsep}{4.5pt}}

\graphicspath{{figures/}}

\usepackage{algorithm}
\floatname{algorithm}{Algorithm}
\usepackage{algpseudocode}

\newtheoremstyle{encompact}{3pt}{3pt}{\itshape}{}{\bfseries}{.}{0.5em}{}
\theoremstyle{encompact}
\newtheorem{definition}{Definition}
\newtheorem{problem}{Problem}
\newtheorem{proposition}{Proposition}
\newtheorem{assumption}{Assumption}
\AtBeginEnvironment{thebibliography}{\footnotesize\setlength{\bibsep}{0pt plus 0.2ex}}

\title{Ineffective Occupancy in Heterogeneous GPU Clusters: Prediction, Contribution Analysis, and Readiness-Aware Scheduling}
\author{%
Tao Zhang\textsuperscript{1}\quad Bin Liao\textsuperscript{2,3,*}\\[6pt]
{\small 1. School of Information Engineering, Guizhou University of Traditional Chinese Medicine, Guiyang 550025, Guizhou, China}\\
{\small 2. School of Big Data Statistics, Guizhou University of Finance and Economics, Guiyang 550025, Guizhou, China}\\
{\small 3. School of Mathematics and Statistics, Guizhou University of Finance and Economics, Guiyang 550025, Guizhou, China}\\[4pt]
{\small (zhangtao031@gzy.edu.cn)}}
\date{}

\usepackage[hidelinks,unicode]{hyperref}
\hypersetup{
  pdftitle={Ineffective Occupancy in Heterogeneous GPU Clusters: Prediction, Contribution Analysis, and Readiness-Aware Scheduling},
  pdfkeywords={GPU clusters, ineffective occupancy, readiness prediction, admission ordering},
  bookmarksnumbered=true,
}

\makeatletter
\patchcmd{\@maketitle}{\null\vskip 2em}{\null\vskip 0.5em}{}{%
  \PackageWarning{paper-p05-en-v32}{A2 title-top patch FAILED}}
\makeatother

\begin{document}
\maketitle
\begingroup
\renewcommand{\thefootnote}{}\footnotetext{Funding: Guizhou Provincial Basic Research Program (Natural Science) (No. MS[2025]226); Doctoral Startup Fund of Guizhou University of Traditional Chinese Medicine (No. [2024]07); Scientific Research Startup Project for Introduced Talents of Guizhou University of Finance and Economics (No. 2023YJ26). Corresponding author: Bin Liao (E-mail: liaobin665@163.com).}
\endgroup

\begin{abstract}
In heterogeneous GPU clusters, GPUs are costly and scarce, yet placing a job on a machine does not mean that the job becomes ready to run. A large amount of allocated capacity can remain occupied for long periods without producing useful computation, further aggravating the waste of compute resources. On the Alibaba GPU cluster trace we find that, even when GPU capacity is sufficient and placement is complete, 52.70\% of execution records never became ready. We term this phenomenon ineffective occupancy. Building on this diagnosis, and keeping the placement policy fixed, we propose Readiness-Aware Scheduling (RAS): it uses only nine attributes observable at admission time to predict readiness probability and reorders admission accordingly. Under a five-fold split grouped by machine, the gradient-boosted model reaches an ROC-AUC of 0.8187. A feature-group contribution analysis shows that the model's score is driven mainly by job profile, duration, and GPU model, with the request-size group contributing least, followed by the scheduling and concurrency group. In GPU terms, RAS reduces the ineffective-occupancy ratio (the share of cluster GPUs held under ineffective occupancy) from 43.64\% to 27.15\% and raises the effective coverage of large requests (eight GPUs or more) from 46.11\% to 63.18\%, without sacrificing capacity utilization. Relative to the upper bound obtained by ordering with ground-truth labels, RAS captures 65.1\% of the attainable improvement, well above the interpretable rules built only from the contribution-analysis conclusions; the gain therefore comes mainly from predicting with all nine attributes jointly rather than from combining rules built on any single attribute group. These results indicate that readiness status is a measurable, predictable, and optimizable dimension of admission decisions in heterogeneous GPU clusters.
\end{abstract}

\noindent\textbf{Keywords:} heterogeneous GPU clusters; ineffective occupancy; readiness prediction; admission ordering; interpretable machine learning; scheduling optimization

\section{Introduction}

Heterogeneous GPU clusters have become the mainstream infrastructure for large-scale AI training and inference. Unlike homogeneous clusters, heterogeneous clusters deploy multiple GPU models at the same time, and jobs (such as large language model training and online inference services) differ in their requirements for GPU model, GPUs per node, and network topology. This difference changes the basic resource-allocation problem: a homogeneous cluster asks whether any GPU is still available, whereas a heterogeneous cluster must answer whether the available GPUs can satisfy a specific job's requirements---that is, the match between job demand and GPU supply \citep{li2026heterogeneity, weng2022mlaas}.

In practice, however, this match works poorly. A Cast AI survey of more than 23,000 Kubernetes clusters reports an average GPU utilization of only 5\% in enterprise environments \citep{castai2026gpu}; an ACM Computing Surveys review likewise reports that GPU utilization in production clusters at Alibaba, SenseTime, Microsoft, and other organizations typically ranges from 25\% to 50\% \citep{huang2025efficiency}.

The problem is not limited to a mismatch between supply and demand. Even when GPUs have been allocated, a substantial fraction of jobs never become ready. The Cast AI report lists ``allocated but persistently near-zero utilization'' as one of the main types of GPU waste \citep{castai2026gpu}; Alibaba's production traces record the same phenomenon \citep{li2026heterogeneity}. Taken together, these two lines of evidence show that successful allocation does not imply that a job becomes ready: the problem extends from ``GPUs cannot be allocated'' to ``GPUs have been allocated but the job never becomes ready.'' The coexistence of compute scarcity and compute waste has become a prominent obstacle to improving the efficiency of AI infrastructure.

Characterization studies of production clusters further reveal how this contradiction manifests. A study of Alibaba production traces \citep{li2026heterogeneity} points out that high GPU demand does not translate into high effective utilization: idle GPUs cannot be allocated promptly because of missing matched CPUs, network-locality constraints, user over-reservation, and whole GPUs idling across machines. Another characterization study of MLaaS clusters \citep{weng2022mlaas} likewise documents systematic mismatches between the form of GPU supply and job demand. These works, however, focus mainly on cluster-level allocation and fragmentation rates; we focus on a more basic question: can a placed job actually become ready?

We answer this question at the job level. On our trace, placement is usually not the bottleneck: only one record requests more GPUs than the machine it was placed on can hold (Section~3.3). The real loss is not ``cannot fit'' but the substantial share of placed jobs that never become ready. We call this phenomenon ineffective occupancy: GPUs are occupied by a job, but the job never truly runs. Among the 40,129,894 labeled execution records we use, this phenomenon accounts for 52.70\%. Its existence implies that scheduling optimization should not pursue placement counts alone; it should also consider whether a placed job can actually run.

To address this problem, we propose Readiness-Aware Scheduling (RAS). Our contributions are threefold:

\textbf{(1) Identifying and measuring ineffective occupancy:} We focus on the phenomenon that a substantial share of placed jobs never become ready, and we measure its scale and distribution at the job level.

\textbf{(2) Building a readiness predictor and analyzing feature contributions:} Using only nine attributes observable at admission time, we build a readiness predictor and verify its ranking ability under a machine-grouped split. A contribution analysis of feature groups shows that the model's score is driven mainly by job profile, duration, and GPU model rather than being uniformly distributed.

\textbf{(3) Proposing and validating a readiness-aware scheduling algorithm:} With the placement rule fixed, we propose an admission-ordering algorithm based on predicted readiness probability and verify its effect in suppressing ineffective occupancy.

The remainder of this paper is organized as follows. Section~2 reviews related work, Section~3 defines the problem and the data, Section~4 presents the diagnostic analysis, Section~5 develops and evaluates readiness-aware scheduling, and Section~6 discusses the results and concludes.

\section{Related Work}

This section reviews existing work along three lines: cluster characterization and effective utilization, placement optimization and admission ordering, and the mapping from prediction to scheduling. Section~2.4 further contrasts our work with these lines.

\subsection{Cluster Characterization and Effective Utilization}

Cluster characterization studies focus on GPU utilization, resource allocation, and fragmentation \citep{li2026heterogeneity, weng2022mlaas, guo2019alibaba}. Early analyses of Alibaba production traces \citep{guo2019alibaba}, together with measurements of heterogeneous MLaaS clusters \citep{weng2022mlaas}, document a persistent gap between requested size and actual usage. A study of Alibaba production traces \citep{li2026heterogeneity} further attributes this gap mainly to supply-side factors: idle GPUs are not scarce in aggregate but are scattered across machines, lack matched CPUs, violate network-locality constraints, or are held by user reservations; the same study notes that fractional-GPU fragmentation has become relatively minor, while the idling of whole GPUs across machines has become the main problem. Industry reports and reviews estimate utilization at the 5\% and 25\%--50\% levels, respectively \citep{castai2026gpu, huang2025efficiency}; scheduling studies give broad estimates of 5\%--30\% \citep{kunkerkar2026queuetheoretic} or below 50\% \citep{mamirov2025reducing}.

These works take the cluster as the unit of analysis and focus on allocation rate, fragmentation, or overall utilization. For example, the defragmentation algorithm of the Alibaba trace study \citep{li2026heterogeneity} reduces the number of machines with idle resources. But a higher cluster-level allocation rate does not necessarily imply higher effective output at the job level. Work on ghost occupancy \citep{ghost2026} observes a related phenomenon: Kubernetes clusters contain GPUs that are allocated yet produce no useful work, and such allocations are detected and reclaimed. Overall, these works lay the foundation for understanding GPU utilization, but their unit of analysis remains the cluster and does not extend to the job level.

\subsection{Placement Optimization and Admission Ordering}

\textbf{(1) Placement optimization:} Placement optimization is one of the main lines of GPU cluster scheduling and has evolved from generic bin-packing rules to GPU-specific schedulers. Early one-dimensional bin-packing rules (First-Fit, Best-Fit, Worst-Fit) target packing efficiency and struggle to capture the heterogeneity of GPU jobs. FGD \citep{weng2023fgd} moves the objective from ``can the job be placed'' to ``is the remaining space still usable after placement'': the quality of a placement decision depends not only on whether the current job fits but also on whether the remaining space will remain usable in the future.

Later GPU-specific schedulers incorporate job characteristics: Tiresias \citep{gu2019tiresias} orders jobs by least attained service; Gavel \citep{narayanan2020gavel} solves a linear program over a speed matrix; Pollux \citep{qiao2021pollux} chooses the number of GPUs by goodput; Sia \citep{jayaram2023sia} mixes coarse- and fine-grained resource allocation; and Synergy \citep{lao2022synergy} allocates according to resource sensitivity. These works cover the two core questions of placement: where to place a job and how many GPUs to allocate to it. Industrial schedulers typically abstract placement as a pluggable scoring module; for example, Borg (Google's cluster management system) \citep{verma2015borg} and Kubernetes \citep{k8s2024scheduler} implement placement rules as configurable scoring items.

\textbf{(2) Admission ordering:} Admission ordering has recently attracted attention as a problem in its own right. Work in queueing theory \citep{kunkerkar2026queuetheoretic} proves that under multi-dimensional resource demands the optimal admission-ordering problem is NP-hard, and points out that the greedy policy of Kueue (a Kubernetes job queueing controller) lacks theoretical guarantees in the waiting-time dimension. In contrast to this theoretical analysis, industry has made admission a separate controller: Runway (an AI video company) uses Kueue as its admission controller and raises GPU utilization by more than 20 percentage points \citep{runway2026nogpuidle}. In that architecture, Kueue decides whether a job starts, and kube-scheduler (the default Kubernetes scheduler) \citep{k8s2024scheduler} decides where admitted pods run. Another line starts from scheduling as a whole: a multi-cluster compute scheduling algorithm based on deep deterministic policy gradient (DDPG) reinforcement learning \citep{hu2025ddpg} solves scheduling as an integrated optimization problem and does not treat admission ordering separately.

\subsection{From Prediction to Scheduling}

By prediction target and scheduling interface, existing work falls into three categories---duration prediction, resource prediction, and readiness/startup-delay prediction; learning-based scheduling can serve as a separate decision interface.

\textbf{(1) Duration prediction and learning-based scheduling:} Duration prediction is the most mature direction: it answers ``how long will the job run'', outputs a scalar duration, and serves placement or backfilling decisions \citep{liu2025ora}. Another line is learning-based scheduling: replacing hand-written heuristics with policy networks and embedding predictors into the decision policy \citep{mao2019decima}. GPARS \citep{wang2024gpars} uses a graph attention network to predict job duration and reduces waiting time by 10.29\% and improves resource utilization by 7.47\% on the Alibaba and Philly traces; its prediction target is still duration, and the prediction is used for GPU-type assignment.

\textbf{(2) Resource prediction:} On GPU clusters, prediction-driven scheduling has made several advances in recent years. PRISM \citep{yuan2025prism} uses an LSTM to predict job arrival patterns and resource demands, reducing mean job completion time (JCT) by 25\%--40\%. Wind \citep{lu2026wind} uses historical execution patterns to predict task demands and completion times and outperforms the default Kubernetes scheduler on a 139-machine cluster. GFS \citep{duan2025gfs} uses a lightweight prediction model to anticipate GPU demand across tenants, reducing the eviction rate by 33.0\% and queueing delay by 44.1\%.

\textbf{(3) Readiness prediction and startup-delay estimation:} A study of the Alibaba GPU trace \citep{tu2024llm} predicts instance readiness status, scheduling delay, and capacity risk for a recommendation-service scenario, using a large language model (LLM) augmentation approach. On container startup delay, ADAPT \citep{baghel2026adapt} tracks container cold-start duration with an online exponentially weighted moving average (EWMA) estimator and feeds it as a dynamic variable into a model-predictive controller to optimize the replica count and planning horizon; CLUEAI \citep{jayahanthan2025clueai} combines machine learning with statistical methods to predict CPU allocation during container startup, reducing average application startup time by 19.9\%.

\subsection{Differences from Existing Work}

Table~1 compares this paper with four representative lines of work.

\TABCAP{Table 1. Differences between this paper and representative work.}

{\def\LTcaptype{none} 
\begin{longtable}[]{@{}
  >{\raggedright\arraybackslash}p{(\linewidth - 10\tabcolsep) * \real{0.1667}}
  >{\raggedright\arraybackslash}p{(\linewidth - 10\tabcolsep) * \real{0.1667}}
  >{\raggedright\arraybackslash}p{(\linewidth - 10\tabcolsep) * \real{0.1667}}
  >{\raggedright\arraybackslash}p{(\linewidth - 10\tabcolsep) * \real{0.1667}}
  >{\raggedright\arraybackslash}p{(\linewidth - 10\tabcolsep) * \real{0.1667}}
  >{\raggedright\arraybackslash}p{(\linewidth - 10\tabcolsep) * \real{0.1667}}@{}}
\toprule\noalign{}
\begin{minipage}[b]{\linewidth}\raggedright
Aspect
\end{minipage} & \begin{minipage}[b]{\linewidth}\raggedright
Cluster characterization\citep{li2026heterogeneity, weng2022mlaas}
\end{minipage} & \begin{minipage}[b]{\linewidth}\raggedright
Fragmentation scheduling\citep{weng2023fgd}
\end{minipage} & \begin{minipage}[b]{\linewidth}\raggedright
GPU-specific schedulers\citep{gu2019tiresias, narayanan2020gavel, qiao2021pollux, jayaram2023sia}
\end{minipage} & \begin{minipage}[b]{\linewidth}\raggedright
Prediction-driven scheduling\citep{liu2025ora, mao2019decima, yuan2025prism, lu2026wind, duan2025gfs, wang2024gpars, tu2024llm}
\end{minipage} & \begin{minipage}[b]{\linewidth}\raggedright
This paper
\end{minipage} \\
\midrule\noalign{}
\endhead
\bottomrule\noalign{}
\endlastfoot
Research object & Cluster & Placement decision & Placement decision & Job/cluster & Job \\
Decision & Analysis (no decision change) & Placement & Placement and GPU count & Placement/allocation & Admission order (placement fixed) \\
Adjustment & None (analysis) & Placement rule & Placement + GPU count & Placement/allocation & Admission order \\
Uses prediction & No & No & No & Yes (duration/demand) & Yes (readiness probability) \\
Prediction target & --- & --- & --- & Duration, arrival pattern, resource demand & Binary readiness probability \\
Optimization goal & Allocation rate, fragmentation & Fragmentation increment & Goodput, JCT & JCT, utilization & Effective output (share of GPUs actually used) \\
Evaluation & Cluster level (production) & Fragmentation rate (production) & JCT, makespan & JCT, utilization & Job-level ready output \\
\end{longtable}
}

\noindent{\footnotesize Note: the seven aspects describe, for each line, the unit of analysis, the decision content, the adjustment mechanism, whether prediction is used, the prediction target, the optimization goal, and the evaluation method; the ``This paper'' column is the reference. ``Job-level ready output'' means that historical readiness labels serve as the outcome attribute in replay (see Section~5.1). Cold-start prediction work \citep{baghel2026adapt, jayahanthan2025clueai} is related, but its scenario does not map directly onto GPU admission, so it is not listed.}

Taken together, the main differences between this paper and the four lines above can be summarized in two respects.

(1) Different problem framing. Cluster characterization remains at the cluster level and focuses on allocation rate, fragmentation, or overall utilization; we move the unit of analysis down to individual jobs and reveal ineffective occupancy---``GPU capacity is sufficient but the job never becomes ready.'' Accordingly, we propose RAS, which uses predicted readiness probability as the admission score and admits jobs in descending order, thereby reducing the extent to which GPUs are held by jobs that remain unready for long periods.

(2) Different decision variables and prediction targets. Fragmentation scheduling and GPU-specific schedulers optimize the placement choice; prediction-driven scheduling predicts duration, arrival patterns, or resource demand and embeds the predictor in the placement algorithm. We fix the placement rule, adjust only the admission order, and predict the job-level binary readiness probability. Our predictor is therefore not embedded in the placement decision but used directly for admission ordering. The evaluation target differs accordingly: existing work asks whether placement choices are sound and whether utilization improves, whereas we ask whether GPUs are actually put to use.

\section{Problem Definition and Data}

Resource decisions are a counting problem in homogeneous clusters and a matching problem in heterogeneous clusters: the former asks ``is there any GPU left?'', whereas the latter asks ``do the available GPUs satisfy the job's demand?'' This section turns ineffective occupancy into a measurable, interpretable object: we first give the definitions, then state three problems and fix the notation, and then introduce the data and the evaluation protocol.

\subsection{Definition of Ineffective Occupancy}

\begin{definition}[Effective and ineffective occupancy]
A job-matching instance consists of a triple $(q_j,\, C_j,\, y_j)$:
\begin{itemize}
\item $q_j = (s_j, g_j, t_j, \mu_j, p_j, a_j, d_j)$ is the request of job $j$; the seven components are the accelerator model $s_j$, the number of requested GPUs $g_j$, the job type $t_j$, the model type $\mu_j$, the priority $p_j$, the generative-request flag $a_j \in \{0,1\}$, and the declared duration $d_j$;
\item $C_j$ is the cluster state when job $j$ enters admission: the set of machines $\mathcal{M}$, and for each machine $i \in \mathcal{M}$ its model, GPU count $c_i$, CPU capacity, and rack membership;
\item $y_j \in \{0,1\}$ is the readiness label, with $y_j = 1$ indicating that the job becomes ready.\end{itemize}
Readiness means that the job has obtained a GPU allocation and has completed image pulling, container startup, and runtime preparation, so that it can begin effective computation. Suppose job $j$ is placed on machine $i$: if $y_j = 1$, we call the placement effective occupancy; if $y_j = 0$ although the capacity $c_i$ of machine $i$ satisfies $c_i \geq g_j$, we call the placement ineffective occupancy---the job was admitted and the machine on which it was placed had enough GPUs for the request, yet the job never became ready.
\end{definition}

$C_j$ cannot be observed directly. The released data record the machine on which each job actually ran, so what we observe is ``which machine this match landed on'', not ``the state of the whole cluster at admission time''. Definition~1 divides placement outcomes into two classes---effective and ineffective occupancy---and we focus on the latter.

Definition~1 characterizes the occupancy state of a single job; the following metric aggregates it to the cluster level.

\begin{definition}[Ineffective-occupancy ratio]
Given the placed set $P$ of one admission and the total number of cluster GPUs $\Gamma = \sum_{i \in \mathcal{M}} c_i$, the ineffective-occupancy ratio is
\begin{equation}
W(P) \;=\; \frac{1}{\Gamma}\sum_{j \in P}(1 - y_j)\, g_j,
\end{equation}
that is, the share of cluster GPUs held under ineffective occupancy.
\end{definition}

$W(P)$ is the main metric of Section~5: it separates ``GPUs are occupied'' from ``GPUs are used effectively'' and aggregates Definition~1 to the policy level.

We use readiness status as the observable indicator. The official definition in the released data is ``whether the pod has ever been ready''. ASI also provides a ``Standby'' mechanism: high-priority users can voluntarily release allocated GPUs and CPUs while keeping the container, and restore them when needed. Voluntary yielding is recorded by the \texttt{state\_public} field of the pod-hour table, which is not included in the execution-summary table we use; hence ``not ready'' in this paper does not include voluntary yielding. In addition, this convention does not exclude executions that exit shortly after becoming ready; such records count as effective occupancy. On these definitions, the paper addresses three questions. Let $x_j$ denote the feature vector formed by the nine observable attributes (the seven request components of Definition~1 plus two scheduling-side attributes, the scheduling delay and the number of concurrent requests on the same machine).

\begin{problem}[Readiness prediction]
Given the feature vector $x_j$ and the readiness label $y_j$, learn a probabilistic model $\hat f:\ x_j \mapsto \Pr(y_j = 1 \mid x_j)$, and evaluate its ranking ability under a machine-grouped protocol.
\end{problem}

\begin{problem}[Structural contribution analysis]
Given a fitted $\hat f$ and the feature-group partition $\mathcal{G}$ defined in Section~4.1, measure how much each group drives the predicted value of $\hat f$, and check the correspondence of this structure on the supply side and in the spatial distribution.
\end{problem}

\begin{problem}[Readiness-aware scheduling]
Given a set of jobs $\mathcal{J}$, each job's requested GPU count $g_j$ and readiness probability $\hat f(x_j)$, and cluster capacity given by the idle GPU counts $\{u_i\}_{i \in \mathcal{M}}$ of the machines in $\mathcal{M}$, determine the admission order under a fixed placement rule so as to minimize the ineffective-occupancy ratio $W(P)$.
\end{problem}

Problems~1 and 2 are answered in Section~4 and Problem~3 in Section~5. The three connect in sequence: readiness probability provides the object to be explained by the contribution analysis, and the contribution analysis delimits the room for adjustment at admission. Table~2 lists the main notation.

\needspace{15\baselineskip}
\TABCAP{Table 2. Main notation.}

{\centering\small
\begin{tabular}{@{}l@{\quad}p{0.29\textwidth}@{\hspace{1.5em}}l@{\quad}p{0.29\textwidth}@{}}
\toprule
\multicolumn{2}{@{}l}{\bfseries Symbol} & \multicolumn{2}{l@{}}{\bfseries Meaning} \\
\midrule
$j$, $\mathcal{J}$ & Job and job set & $i$, $\mathcal{M}$ & Machine and machine set \\
$q_j$ & Job request (components in Definition~1) & $c_i$ & GPU capacity of machine $i$ \\
$x_j$ & Nine observable attributes & $c_{\max}$, $\Gamma$ & Maximum GPUs per machine and total cluster GPUs \\
$y_j$ & Readiness label, $\{0,1\}$ & $C_j$ & Cluster state at admission \\
$\hat f$ & Readiness probability model & $\pi_j$ & Admission score, $\pi_j = \hat f(x_j)$ \\
$\mathbf{u}$, $u_i$ & Idle-GPU vector and its components & $\rho$ & Load (requested GPUs over total GPUs) \\
$P$ & Placed set of one admission & $W(P)$ & Ineffective-occupancy ratio (Definition~2) \\
$\phi_k(\hat f, x_j)$ & Contribution of feature $k$ & $\bar{\phi}_G$, $\sigma_G$ & Contribution value and contribution share \\
\bottomrule
\end{tabular}\par}

Here $\Gamma$, $\pi_j$, $\mathbf{u}$, $u_i$, $P$, and $W(P)$ are used only in Section~5, and $\phi_k$, $\bar{\phi}_G$, and $\sigma_G$ only in Section~4.3; the remaining notation is used throughout.

\subsection{Dataset and Features}

We use the \texttt{cluster-trace-gpu-v2026} trace (Alibaba Serverless Infrastructure, ASI) publicly released by the Alibaba cluster trace programme \citep{alibaba_clusterdata}. It provides four fact tables: the pod-hour table, the server-hour table, the network-element-hour table, and the job execution-summary table. We use the execution-summary table for demand and the server-hour table for supply; the latter provides each machine's model, GPU count, and CPU capacity, together with its rack (access-switch domain) identifier, and is the only source that links a job request to the machines it could use. Table~3 lists the demand-side data conventions and the nine features, Table~4 gives the machine structure we reconstruct from the supply side, and Appendix~A reports the value ranges and descriptive statistics of the nine features.

\TABCAP{Table 3. Execution-summary table and the nine features used in this paper.}

{\def\LTcaptype{none} 
\begin{longtable}[]{@{}
  >{\raggedright\arraybackslash}p{(\linewidth - 2\tabcolsep) * \real{0.2654}}
  >{\raggedright\arraybackslash}p{(\linewidth - 2\tabcolsep) * \real{0.7346}}@{}}
\toprule\noalign{}
\begin{minipage}[b]{\linewidth}\raggedright
Item
\end{minipage} & \begin{minipage}[b]{\linewidth}\raggedright
Value
\end{minipage} \\
\midrule\noalign{}
\endhead
\bottomrule\noalign{}
\endlastfoot
Records & 40,522,321 \\
Labeled records & 40,129,894 \\
Records with an empty label & 392,427 (0.9684\%) \\
Distinct machine identifiers & 37,701 \\
Accelerator models & 13; job types: 6; priority classes: 3 \\
Features (9) & \texttt{gpu\_spec\_public}, \texttt{priority\_class}, \texttt{job\_type\_public}, \texttt{model\_type\_public}, \texttt{is\_genai\_request}, \texttt{gpu\_request}, \texttt{duration\_hours}, \texttt{schedule\_delay\_sec}, \texttt{server\_req\_count} \\
Label & \texttt{ready\_status} \\
Excluded (observable only after the decision) & \texttt{schedule\_status}, \texttt{ready\_delay\_sec} \\
Aggregation granularity & One record per job execution (only execution spans that request GPUs) \\
\end{longtable}
}

\noindent{\footnotesize Note: the nine features are accelerator model, priority class, job type, model type, generative-request flag, requested GPUs, declared duration, scheduling delay, and the number of concurrent requests on the same machine.}

The properties of the demand side bear directly on the analysis in Section~4. The labels are nearly balanced: 47.30\% of labeled records are ready, so accuracy is uninformative and Section~4.2 therefore relies on threshold-independent ranking metrics. Request sizes are highly concentrated: 78.42\% of records request exactly one accelerator, and only 0.73\% request more than eight. Declared duration is heavy-tailed: the mean is 5.15 hours and the median is 0.42 hours, with the maximum shown in Figure~1. The job-identifier field \texttt{workload\_id} is empty for 89.0\% of labeled records, so it cannot serve as an independent grouping variable.

The structure of the supply side makes matching the core problem. The cluster expanded throughout the observation period: the number of machines grew from 28,036 to 37,707 and the number of GPUs from 111,321 to 155,410. Expansion did not change the distribution of machine structure, however: GPUs per machine is bimodal, the vast majority of machines have only one or two GPUs, and a job requesting eight GPUs can land on only 30.8\% of machines (although those machines hold 74.3\% of the GPUs), while a job requesting sixteen can land on only 7.5\%. The CPU-to-GPU ratio is equally uneven: across model and GPU-count combinations, the number of CPU cores per GPU varies widely (Figure~2).

\begin{figure*}[t]
\centering
\includegraphics[width=\textwidth]{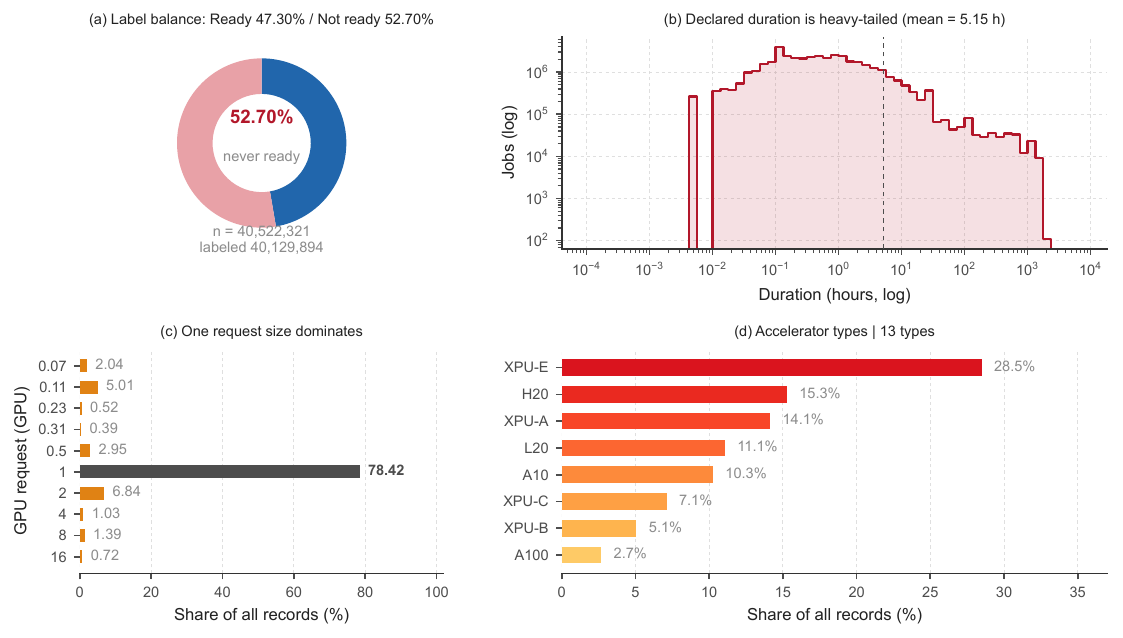}
\caption{(a) Label balance; (b) declared-duration distribution; (c) requested-GPU distribution; (d) share of records by accelerator model.}
\label{fig:1}
\end{figure*}

\TABCAP{Table 4. Supply-side structure reconstructed from the server-hour table (185-day union, 38,048 machines).}

{\def\LTcaptype{none} 
\begin{longtable}[]{@{}
  >{\raggedright\arraybackslash}p{(\linewidth - 2\tabcolsep) * \real{0.4181}}
  >{\raggedright\arraybackslash}p{(\linewidth - 2\tabcolsep) * \real{0.5819}}@{}}
\toprule\noalign{}
\begin{minipage}[b]{\linewidth}\raggedright
Supply attribute
\end{minipage} & \begin{minipage}[b]{\linewidth}\raggedright
Observed value
\end{minipage} \\
\midrule\noalign{}
\endhead
\bottomrule\noalign{}
\endlastfoot
Accelerator models & 13 \\
GPUs per machine & Bimodal, with only a few distinct values; 30.8\% of machines can host a job requesting 8 GPUs \\
CPU cores per GPU & 12--192 \\
Racks & 2,377 \\
Share of racks with a single model & 90.5\% \\
Machines per rack (median/max) & 7 / 136 \\
First day the rack field is available & Day 109 \\
Machines present throughout & 27,573 (of 38,048) \\
Machines never appearing in the job table & 347 \\
Time integrals of demand and supply GPU$\cdot$hours & 254,980,926 / 587,703,816 \\
\end{longtable}
}

Demand GPU$\cdot$hours are about 43.4\% of supply, indicating that capacity is sufficient in aggregate while local matching remains a problem. Two further points about the supply side. First, the rack field is available only in the later part of the observation period, so topology-related analysis covers only that period. Second, the set of machines is identical at different times within the same day---the server-hour table is effectively a daily snapshot---so we use it only at daily granularity.

\clearpage
\begin{figure*}[t]
\centering
\includegraphics[width=\textwidth]{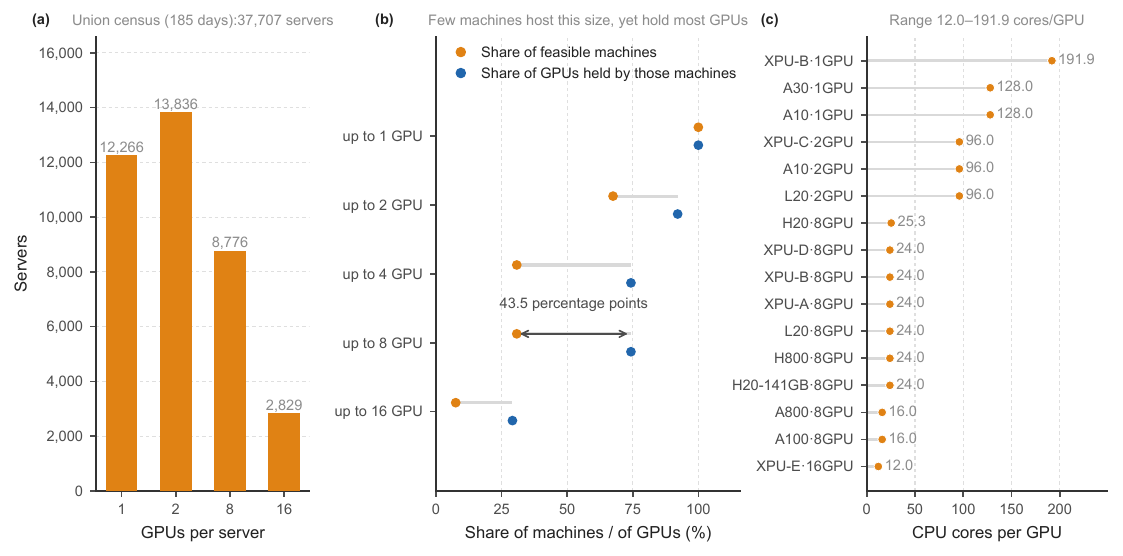}
\caption{(a) Distribution of GPUs per machine; (b) share of machines that can host each request size and the share of GPUs they hold; (c) CPU cores per GPU.}
\label{fig:2}
\end{figure*}

Initial evidence of ineffective occupancy. Applying the ineffective-occupancy case of Definition~1 to the data: of the 40,129,894 labeled records, 52.70\% never became ready, and the GPU capacity of the machines they were placed on is sufficient for the request (test in Section~3.3). This ratio is the core phenomenon of this paper. Section~4 shows that it is predictable and decomposable, and Section~5 gives an admission rule that exploits this prediction.

\subsection{Evaluation Protocol and Leakage Check}

If a field's value is produced only after the admission decision is complete, we exclude it from the feature matrix. All models in Section~4 exclude these fields. The execution-summary table has no time column, so we use a single-point observation and do not analyze temporal drift or seasonality.

\needspace{9\baselineskip}
\begin{samepage}
\TABCAP{Table 5. Leakage check: two fields observable only after the decision and the nine observable attributes.}
\end{samepage}

{\def\LTcaptype{none} 
\begin{longtable}[]{@{}lll@{}}
\toprule\noalign{}
Column & Single-feature ROC-AUC & Treatment \\
\midrule\noalign{}
\endhead
\bottomrule\noalign{}
\endlastfoot
\texttt{schedule\_status} & 0.5055 & Excluded (produced after the decision) \\
\texttt{ready\_delay\_sec} & 0.9644 & Excluded (produced after the decision; directly leaks the label) \\
Nine observable attributes, single-feature ROC-AUC range & 0.5056--0.5786 & Included \\
\end{longtable}
}

The two fields in Table~5 are excluded because of when their values arise: both are produced only after the admission decision is complete and therefore cannot enter the feature matrix. \texttt{schedule\_status} is the scheduler's outcome flag for the execution. Its relation to the label is almost degenerate in one direction: its conditional probability of being 1 is 1.0000 on ready records and 0.9890 on non-ready records; but as a single predictor it carries no information, with a single-feature ROC-AUC of only 0.5055 after aligning its direction with the label. \texttt{ready\_delay\_sec} is the decisive case: it is measured only after a job becomes ready, its single-feature ROC-AUC is 0.9644, and a model that includes it would directly leak the label. By comparison, the single-feature ROC-AUC of the nine observable attributes ranges from 0.5056 to 0.5786.

One further fact concerns the data convention. The execution-summary table contains only execution spans already placed on machines with sufficient GPU capacity: among the 40,129,894 labeled records, only one has \texttt{gpu\_request} exceeding the GPU capacity of the machine it was placed on. Hence the 52.70\% non-ready share is not due to insufficient capacity but to ineffective occupancy. This distinction separates our work from cluster-level utilization analysis; Section~4.4 analyzes its structure.

We split by whole machine: if records from the same machine appeared in both the training and test sets, the model could memorize whether that machine's jobs are usually ready, whereas a new machine at deployment provides no such information. Grouping by machine puts records from the same machine in the same fold, so the evaluation measures generalization to new machines; a random split is optimistic, whereas a machine-grouped split is closer to deployment. We use the machine-grouped split as the primary protocol. Evaluation relies mainly on a threshold-independent ranking metric (ROC-AUC) because the classes are nearly balanced and accuracy is uninformative; intervals come from bootstrap resampling of held-out records.

\FloatBarrier

\section{Diagnosis: Readiness Is Predictable and Decomposable}

This section answers the three questions. The method is given in Sections~4.1 and 4.2, the contribution analysis in Section~4.3, and the structural evidence in Section~4.4. The output is the readiness probability $\hat f(x_j)$ used for admission ordering. Section~4.4 narrows the actionable variable to the admission order and connects to Section~5.

\subsection{Readiness Predictor}

This section answers Problem~1: given the instance set of Definition~1, learn $\hat f:\ x_j \mapsto \Pr(y_j = 1 \mid x_j)$, the probability that a job request enters the ready state. It outputs no scheduling action and is not embedded in a placement algorithm; its only use is to provide an admission score. This differs from the common practice in prediction-driven scheduling, where predictors output duration, arrival patterns, or resource demand and are embedded inside the placement decision; our predictor answers only ``will this job become ready?'', and is used for admission ordering.

The nine observable attributes are partitioned into five disjoint feature groups forming $\mathcal{G} = \{G_{\mathrm{form}}, G_{\mathrm{dur}}, G_{\mathrm{spec}}, G_{\mathrm{req}}, G_{\mathrm{sched}}\}$: the job-profile group $G_{\mathrm{form}}$ (job type $t_j$, model type $\mu_j$, generative-request flag $a_j$, priority $p_j$), the duration group $G_{\mathrm{dur}}$ (declared duration $d_j$), the GPU-model group $G_{\mathrm{spec}}$ (request-side model $s_j$, i.e., \texttt{gpu\_spec\_public}, excluding the placement-machine identifier), the request-size group $G_{\mathrm{req}}$ (requested GPUs $g_j$), and the scheduling-and-concurrency group $G_{\mathrm{sched}}$ (scheduling delay $\ell_j$ and the number of concurrent requests on the same machine $h_j$). The two fields observable only after the decision---the scheduler's decision flag and the ready delay---are always excluded from the feature matrix, because their values arise only after the admission decision and mixing them in would constitute data leakage. This partition is used consistently throughout this chapter.

The data split follows Section~3.3. The six models compared cover the linear, single-tree, and ensemble families; two gradient-boosting implementations are included so that the main result does not depend on a single library \citep{grinsztajn2022why, shwartzziv2022tabular}. Evaluation focuses on ROC-AUC (a measure of ranking ability) because the classes are nearly balanced and accuracy is uninformative; PR-AUC and probability-calibration deviation serve as auxiliary metrics. Intervals come from bootstrap resampling of held-out records, and model differences are tested with a paired test on the same resampling indices.

\subsection{Results}

Under the machine-grouped five-fold split, the models outperform the majority-class and lookup baselines on readiness prediction. The reference model achieves an ROC-AUC of 0.8187 with interval {[}0.8130, 0.8244{]}, versus 0.4960 for the majority-class baseline; on the same held-out predictions, the reference model performs comparably to XGBoost (Table~6). Compared with ``looking up historical readiness rates'', the two lookup methods reach 0.6789 and 0.7520; the reference model's gain over the stronger lookup method (0.7520) is 0.0666 in AUC.

The lookup provides a lower bound under the same information; the model must discriminate further. The lookup is fitted within each training set, so it reads the same information as the model; the only difference is whether the nine attributes are used jointly. A positive gain indicates that the nine attributes jointly discriminate better than any single group, which is why Section~5 uses the joint prediction rather than a single attribute group for admission ordering.

Probability calibration is also usable: the reference model's calibration deviation is 0.0028, so its probability values can be interpreted directly against a threshold; Figure~3 compares the ROC, PR, and calibration curves graphically.

\needspace{9\baselineskip}
\begin{samepage}
\TABCAP{Table 6. Main results: machine-grouped five-fold held-out prediction.}
\end{samepage}

{\def\LTcaptype{none} 
\begin{longtable}[]{@{}lllll@{}}
\toprule\noalign{}
Model & ROC-AUC & 95\% CI & PR-AUC & Calibration deviation \\
\midrule\noalign{}
\endhead
\bottomrule\noalign{}
\endlastfoot
Majority class & 0.4960 & {[}0.4880, 0.5032{]} & 0.4703 & 0.0000 \\
Logistic regression & 0.7028 & {[}0.6962, 0.7096{]} & 0.7065 & 0.0204 \\
Decision tree & 0.7705 & {[}0.7640, 0.7772{]} & 0.7857 & 0.0008 \\
Random forest & 0.8115 & {[}0.8056, 0.8175{]} & 0.8191 & 0.0287 \\
XGBoost & 0.8189 & {[}0.8131, 0.8245{]} & 0.8268 & 0.0016 \\
LightGBM (reference) & 0.8187 & {[}0.8130, 0.8244{]} & 0.8266 & 0.0028 \\
\end{longtable}
}

\noindent{\footnotesize Note: all three metrics are computed from the predictions of the same fit on the same held-out records; intervals are bootstrap 95\% percentile intervals over 20,000 held-out records; the calibration deviation uses ten equal-width bins.}

The values in Tables~6, 8, and 9 come from different fits and are used only for within-table comparison.

\begin{figure*}[t]
\centering
\includegraphics[width=\textwidth]{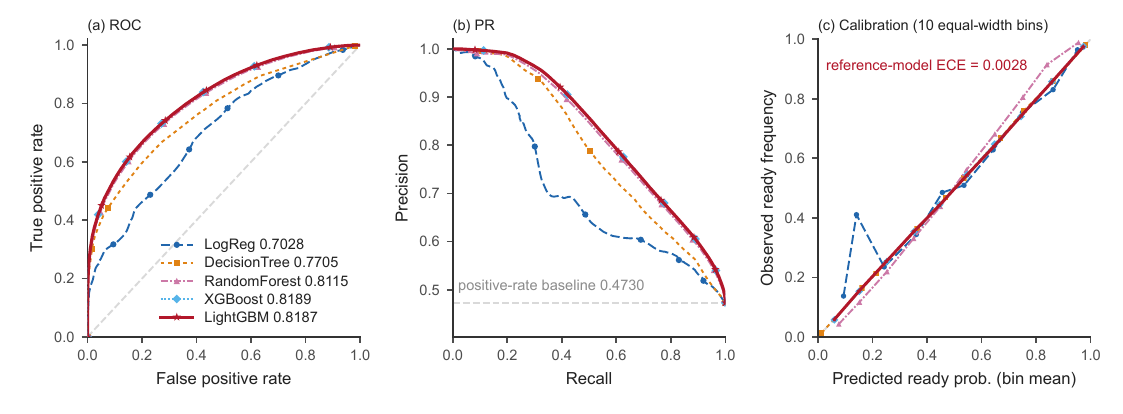}
\caption{(a) ROC curve; (b) PR curve, with the dashed line showing the positive-rate baseline; (c) calibration curve (ten equal-width bins).}
\label{fig:3}
\end{figure*}

\subsection{Contribution of Feature Groups and Ablation Analysis}

The two measures---contribution values and ablation analysis---give somewhat different internal orders but the same conclusion. This section examines the contribution of each of the five feature groups. The global contribution analysis uses path-dependent SHAP (SHapley Additive exPlanations) values, summed by group and averaged over 20,000 held-out records drawn without replacement from the grouped test split; it measures the influence of a group of attributes on the predicted value \citep{lundberg2020treeshap}. Let $\phi_k(\hat f, x_j)$ denote the contribution of feature $k$ on record $x_j$ in sample set $\mathcal{S}$; then the contribution value and contribution share of group $G \in \mathcal{G}$ are

\begin{equation}
\bar{\phi}_G \;=\; \frac{1}{\lvert \mathcal{S} \rvert} \sum_{x_j \in \mathcal{S}} \sum_{k \in G} \bigl\lvert \phi_k(\hat f, x_j) \bigr\rvert,
\qquad
\sigma_G \;=\; \frac{\bar{\phi}_G}{\sum_{G' \in \mathcal{G}} \bar{\phi}_{G'}},
\end{equation}

where $G'$ ranges over the five feature groups in $\mathcal{G}$. Because $\mathcal{G}$ is a disjoint partition, the $\sigma_G$ sum to one, which makes the contribution shares comparable across groups; Eq.~(2) sums absolute values over all features in a group, so the number of features in a group affects the absolute magnitude, and comparisons should be read together with group size.
Table~7 and Figure~4(a) give the contribution value and share of the five groups: job profile 38.3\%, duration 25.6\%, GPU model 18.3\%, together 82.1\%; request size 6.8\%, ranked last; scheduling and concurrency 11.1\%.

\TABCAP{Table 7. Contribution of the five feature groups.}

{\def\LTcaptype{none} 
\begin{longtable}[]{@{}llll@{}}
\toprule\noalign{}
Feature group & Features in group & Contribution value & Share (\%) \\
\midrule\noalign{}
\endhead
\bottomrule\noalign{}
\endlastfoot
Job profile & 4 & 0.9874 & 38.3 \\
Duration & 1 & 0.6606 & 25.6 \\
GPU model & 1 & 0.4710 & 18.3 \\
Scheduling and concurrency & 2 & 0.2857 & 11.1 \\
Request size & 1 & 0.1748 & 6.8 \\
\end{longtable}
}

\noindent{\footnotesize Note: contribution share is the share of the contribution value defined in Eq.~(2) among the five groups, summing to one; rounded values may not sum to exactly 100\%. Contribution values come from the same reference model on 20,000 held-out records; the group partition is unique throughout the paper.}

\begin{figure*}[t]
\centering
\includegraphics[width=\textwidth]{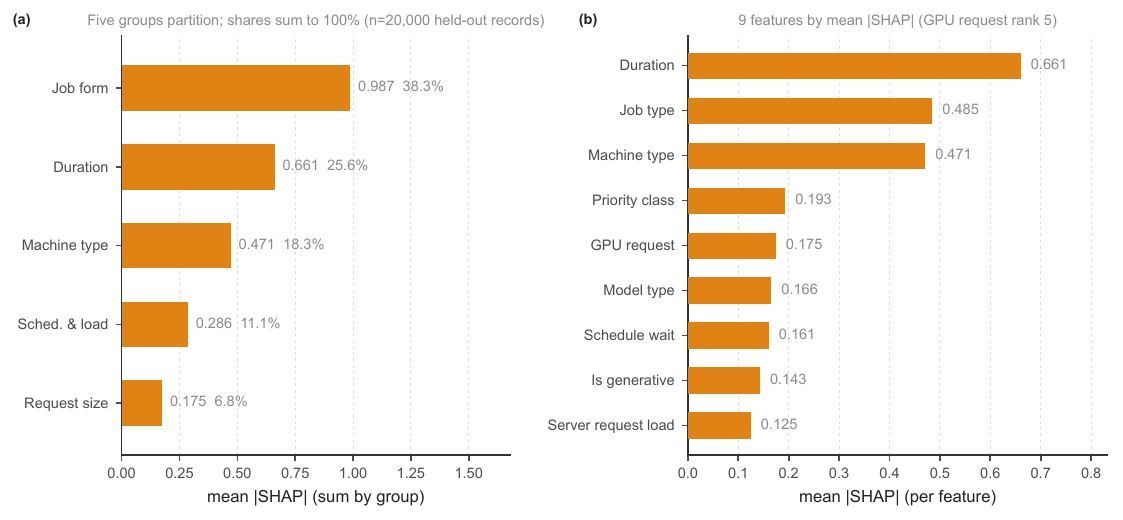}
\caption{(a) Mean absolute Shapley value and contribution share of the five feature groups; (b) mean absolute Shapley value of the nine individual attributes.}
\label{fig:4}
\end{figure*}

Contribution value measures how strongly a group drives the prediction. If one instead uses the performance loss after removing the group, the internal order of the top three changes. This section reports that difference. Table~8 places the two side by side and Figure~5 gives the corresponding bar chart. Path-dependent Shapley values rank job profile first and duration second; ablation analysis ranks duration first. Both are clearly larger than request size and scheduling. The two measures differ because they are different statistics: contribution value is inflated by the number of features in a group, whereas ablation analysis measures whether the group's information can be substituted. The part where the two agree is what the conclusion relies on: both rank request size last (removing it changes AUC by $-0.0098$) and both support the dominance of the three groups; when each group is used alone, duration reaches 0.6567, above job profile (0.6152) and GPU model (0.6222). This section reports the part where they agree: three groups dominate and request size ranks last.

\needspace{9\baselineskip}
\begin{samepage}
\TABCAP{Table 8. Contribution ranking and ablation ranking side by side.}
\end{samepage}

{\def\LTcaptype{none} 
\begin{longtable}[]{@{}lll@{}}
\toprule\noalign{}
Feature group & Contribution value & Performance change after ablation \\
\midrule\noalign{}
\endhead
\bottomrule\noalign{}
\endlastfoot
Duration & 0.6606 & -0.0669 \\
GPU model & 0.4710 & -0.0409 \\
Job profile & 0.9874 & -0.0328 \\
Scheduling and concurrency & 0.2857 & -0.0133 \\
Request size & 0.1748 & -0.0098 \\
\end{longtable}
}

\noindent{\footnotesize Note: the contribution column is the mean absolute value over 20,000 held-out records; the ablation column is the change in test AUC relative to the all-feature model (0.8167); the two columns come from independent fits and are used only to compare rankings.}

\begin{figure*}[t]
\centering
\includegraphics[width=\textwidth]{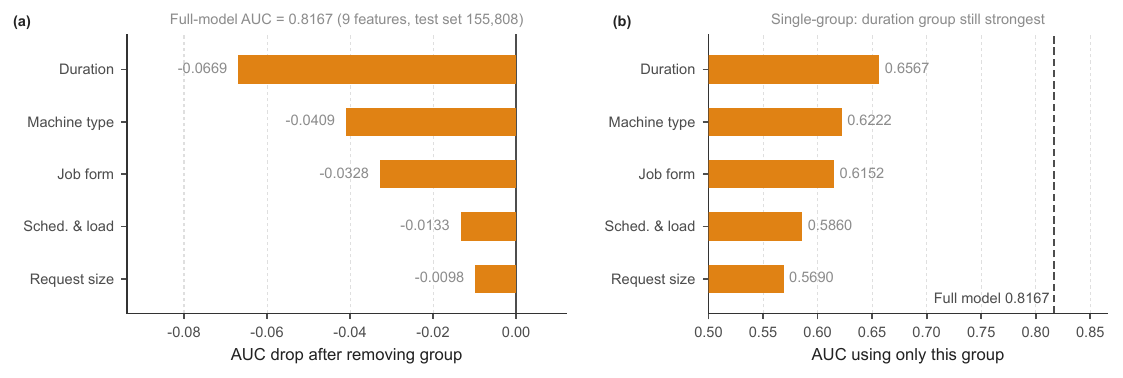}
\caption{(a) Difference in AUC from the all-feature model after removing the group; (b) AUC when trained with only that group.}
\label{fig:5}
\end{figure*}

Figure~6 shows the individual contributions of the nine attributes at the record level; it follows the convention of Figure~4(b) but retains per-record variation.

\begin{figure*}[t]
\centering
\includegraphics[width=\textwidth]{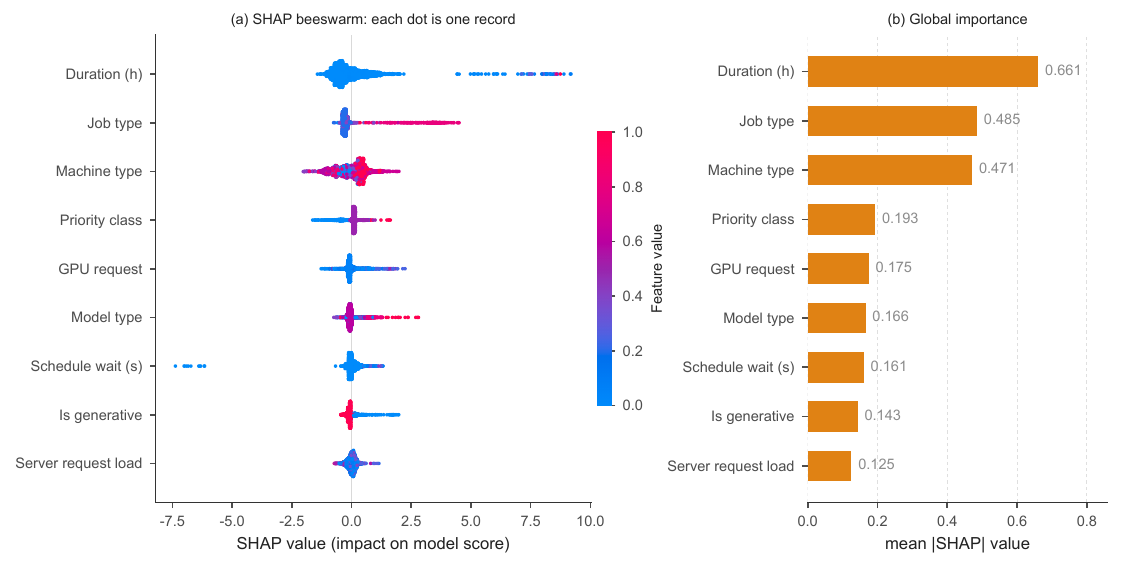}
\caption{(a) SHAP beeswarm for the nine attributes, colored by attribute value; (b) mean absolute SHAP value per attribute.}
\label{fig:6}
\end{figure*}

\subsection{Structural Origin and Concentration}

This section analyzes the supply-side structure and spatial distribution corresponding to the three dominant groups (job profile, duration, GPU model). As noted in Section~3.2, the cluster expanded throughout the observation period but the distribution of machine forms did not change; two points matter here: GPUs per machine is bimodal, and GPU models are not freely interchangeable. Both indicate that ``enough GPUs but the job is not ready'' corresponds to the supply structure rather than to a transient fluctuation on the demand side.

Ineffective occupancy is also spatially concentrated rather than being uniformly random noise across the cluster. Measured by the variance ratio $D$ in Eq.~(3), the observed variance of per-machine readiness rates is far larger than the variance expected from a binomial distribution at the same sample size. The quantile distribution and the values of $D$ are shown in Figure~7. The rack level also shows clear differences.

The quantiles of per-machine readiness and the record-weighted variance ratio use different conventions. The per-machine quantiles include only machines with at least 100 labels and are computed with equal weight per machine (each machine counts once, not weighted by its record count); the overall readiness rate and the variance ratio are record-weighted, and the variance ratio is the observed variance divided by the binomial expected variance:

\begin{equation}
D \;=\; \frac{\sum_{i \in \mathcal{M}^{+}} n_i\,(r_i - \bar{p})^2 \big/ \sum_{i \in \mathcal{M}^{+}} n_i}{\sum_{i \in \mathcal{M}^{+}} n_i \cdot \bar{p}(1-\bar{p})/n_i \big/ \sum_{i \in \mathcal{M}^{+}} n_i},
\end{equation}

where $\mathcal{M}^{+} = \{\, i \in \mathcal{M} : n_i \ge 100 \,\}$ is the set of machines meeting the label threshold, $n_i$ is the number of labeled records on machine $i$, $r_i$ is that machine's readiness rate, and $\bar{p}$ is the record-weighted overall readiness rate. The numerator is the record-weighted observed variance and the denominator is the expected binomial variance at the same sample size. The difference between the two conventions comes from the machine-size distribution: the equal-weight quantile answers ``how low is a typical machine'', whereas the record-weighted ratio answers ``how concentrated is non-readiness in aggregate''; they are not the same quantity and should not substitute for each other.

\begin{figure*}[t]
\centering
\includegraphics[width=\textwidth]{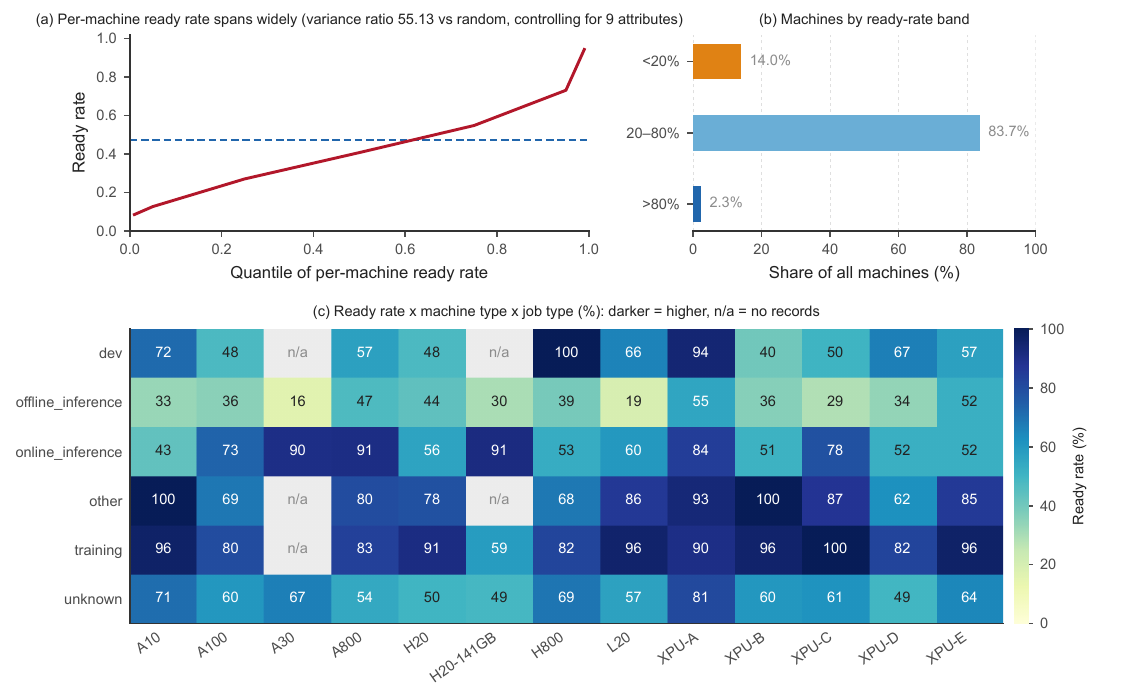}
\caption{(a) Per-machine readiness rates sorted by quantile, with the dashed line showing the overall cluster readiness rate; (b) share of machines by readiness-rate bin; (c) readiness rate across GPU model and job type, with darker color indicating higher readiness and cells with no records marked n/a.}
\label{fig:7}
\end{figure*}

Taken together, these findings narrow the adjustable variables available to Section~5. The model's predicted score is driven mainly by three groups---job profile, duration, and GPU model; the scheduler does not directly change these attributes; the request-size group is not the dominant axis, so policies such as ``large requests first'' do not touch the dominant factors. Combined with the fact noted in Section~3.3 that the execution-summary table contains only execution spans already placed on machines with sufficient GPU capacity (only one record has \texttt{gpu\_request} exceeding the GPU capacity of the machine it was placed on), the 52.70\% non-ready share does not stem from insufficient GPU capacity. Hence the capacity constraint on the placement side is not tight, job attributes cannot be changed, and the only actionable variable left is the admission order. Section~5 therefore connects the predictor to admission and uses the effective output defined in Section~5.1 as the evaluation quantity.

\FloatBarrier

\section{Readiness-Aware Scheduling}

Section~4 narrows the adjustable variable to a single place: the admission order. This section answers Problem~3, turns this restriction into a deployable, comparable rule, and measures it with a different evaluation quantity. Section~5.1 explains why the evaluation quantity must change, Section~5.2 gives the algorithm, complexity, and policy set, Section~5.3 gives the experimental setup and main results, and Section~5.4 performs ablation and sensitivity analyses.

\subsection{From Placement Metrics to Effective Output}

We begin with a minimal example. A cluster has one idle machine and two jobs: one can become ready and one cannot, and the idle GPUs suffice for only one of them. First-come, first-served (FCFS) queues them by arrival; if the job that never becomes ready arrives first, the GPUs are held by a job that cannot become ready. Under a different order, the same GPUs can serve a job that actually runs. In both orders the number of machines and GPUs is unchanged, yet the GPUs actually put to use differ.

This example shows that placement metrics (match rate, allocation rate, fragmentation) measure whether GPUs are occupied and do not answer whether they are used effectively. We focus on the latter: the measurements in Section~4 show that on our trace almost every request that a machine can hold is placed, so placement feasibility is not the bottleneck; the real loss is that a substantial fraction of placed jobs never become ready. The evaluation quantity in this chapter is therefore not how many jobs are placed but the effective output: the share of cluster GPUs occupied effectively, and the job-level metrics derived from it. Definition~2 in Section~3 already defines this quantity as $W(P)$, and this chapter uses it directly as the primary objective; the remaining metrics are computed on the same placed set $P$ under fixed conventions.

To make comparisons fair, this chapter fixes the placement rule of all policies to best-fit (the machine with the smallest idle GPU count that can still hold the job); policies differ only in admission order. The resulting differences therefore come only from admission order, not from the placement algorithm---which is also the only adjustable variable identified by the contribution analysis in Section~4.

\begin{assumption}[Replay assumption]
The admission replay experiment re-simulates admission on historical jobs in a new order, keeps the placement rule fixed, and changes only the admission order; readiness labels are therefore treated as attributes of a job under the current machine supply and enter the comparison.
\end{assumption}

The boundary of this assumption is discussed in Section~6.2.

\subsection{Method: Readiness-Aware Scheduling}

The admission rule can be stated in one sentence: admit in descending order of predicted readiness probability. In deployment it has two stages: offline, a readiness predictor is trained on historically labeled samples; online, its output scores each job and orders admission accordingly. Algorithm~1 gives predictor training, and Algorithm~2 gives the admission procedure; both share the feature-group partition $\mathcal{G}$ and the nine observable attributes from Section~4, and the placement rule stays the same across the two stages.

\begin{algorithm}[!ht]
\caption{Readiness predictor training (RAS stage 1, offline)}
\label{alg:ras_train}
\begin{algorithmic}[1]
\Require Labeled samples $\{(x_j, y_j)\}_{j=1}^{m}$ and grouping key $i_j$ (the machine on which the job actually ran); feature-group partition $\mathcal{G}$; number of folds $K$; number of bins $B$
\Ensure Per-fold reference models $\{\hat f_k\}_{k=1}^{K}$ and held-out readiness probabilities $\{\hat p_j\}_{j=1}^{m}$
\State Build the feature matrix $X \in \mathbb{R}^{m \times 9}$ from the nine observable attributes, excluding the scheduler decision flag and the ready delay, which are observable only after the decision
\State Split the samples into $K$ folds $\{\mathcal{D}_k\}_{k=1}^{K}$ by grouping key $i_j$, so that records from the same machine do not cross folds
\For{$k = 1$ \textbf{to} $K$}
 \State Draw a stratified $10\%$ of the training set $\mathcal{D}_{-k}$ as validation set $\mathcal{D}_k^{\mathrm{val}}$
 \State Fit a gradient-boosted model $\hat f_k$ on $\mathcal{D}_{-k} \setminus \mathcal{D}_k^{\mathrm{val}}$, with early stopping on log loss over $\mathcal{D}_k^{\mathrm{val}}$
 \State Take held-out predictions $\hat p_j \gets \hat f_k(x_j)$ on the held-out set $\mathcal{D}_k$
\EndFor
\State Evaluate the probability-calibration deviation over $B$ equal-width bins and the ranking ability (ROC-AUC, PR-AUC) from the held-out predictions
\State \Return $\{\hat f_k\}$ and $\{\hat p_j\}$
\end{algorithmic}
\end{algorithm}

Algorithm~1 evaluates on held-out predictions: ranking ability by ROC-AUC and PR-AUC, and probability reliability by the expected calibration error (ECE), the arithmetic mean calibration deviation over $B$ equal-width bins,
\begin{equation}
\mathrm{ECE} \;=\; \sum_{b=1}^{B} \frac{\lvert S_b \rvert}{m}\,\bigl\lvert \mathrm{acc}(S_b) - \mathrm{conf}(S_b) \bigr\rvert,
\end{equation}
where $S_b$ is the $b$-th bin and $\mathrm{acc}(S_b)$ and $\mathrm{conf}(S_b)$ are the within-bin label mean and mean predicted probability. The held-out probabilities are not post-processed and are used directly as the readiness probability estimates for all samples.

The ranking key $\pi_j$ in Algorithm~2 takes one of three forms depending on the score source:
\begin{equation}
\pi_j \;=\;
\begin{cases}
-\tau_j, & \text{baseline (arrival order)},\\[2pt]
\hat r(s_j, t_j), & \text{interpretable rule (historical readiness by model $\times$ job type)},\\[2pt]
\hat p_j, & \text{this paper (held-out readiness from Algorithm~1)}.
\end{cases}
\end{equation}
The ranking keys of the remaining baseline policies are given in the ``Basis of admission order'' column of Table~9: shortest-job-first sorts by declared duration $d_j$ ascending, smallest-request-first by $g_j$ ascending, large-request-first by $g_j$ descending, and random ordering uses a uniform random key. All policies share the same placement rule and differ only in the ranking key.

\begin{algorithm}[!ht]
\caption{Readiness-aware admission (RAS stage 2, online)}
\label{alg:ras_admit}
\begin{algorithmic}[1]
\Require Job set $\mathcal{J}$ ($n = \lvert \mathcal{J} \rvert$); requested GPUs $g_j$, arrival order $\tau_j$, and admission score $\pi_j$; machine set $\mathcal{M}$ and idle GPU counts $u_i$; maximum GPUs per machine $c_{\max}$
\Ensure Placement $\alpha_j$ or an unmatched marker for each job, and the ineffective-occupancy ratio $W(P)$
\State Sort $\mathcal{J}$ by $\pi_j$ descending as the primary key and arrival order $\tau_j$ ascending as the secondary key, obtaining admission sequence $\sigma$
\State Maintain bucket counts over the idle-GPU domain $[0, c_{\max}]$ and a query structure for ``the smallest idle count not less than a given value''
\ForAll{each job $j$ in admission sequence $\sigma$}
 \If{$g_j > c_{\max}$}
 \State Mark $j$ as not accommodatable by any machine
 \State \textbf{continue}
 \EndIf
 \State $i^\star \gets \operatorname{BestFit}(g_j)$, the machine with the smallest idle GPU count that is $\ge g_j$
 \If{no such $i^\star$ exists}
 \State Mark $j$ as unmatched
 \State \textbf{continue}
 \EndIf
 \State $u_{i^\star} \gets u_{i^\star} - g_j$, and update the query structure
 \State Record placement $\alpha_j \gets i^\star$, classified as effective or ineffective occupancy according to $y_j$
\EndFor
\State $W(P) \gets \dfrac{1}{\Gamma}\sum_{j \in P}(1 - y_j)\, g_j$
\State \Return $\{\alpha_j\}$, unmatched markers, and $W(P)$
\end{algorithmic}
\end{algorithm}

Proposition~1 gives the computational cost of Algorithm~2; Algorithm~1 is offline training whose cost is dominated by $K$ model fits and does not enter the online admission path.

\begin{proposition}[Complexity]
Let $n = \lvert \mathcal{J} \rvert$. Algorithm~2 runs in $O(n \log n + n \log c_{\max})$, where $c_{\max}$ is the maximum per-machine GPU capacity; in our cluster this value takes only a few distinct values, so the running time is dominated by sorting, $O(n \log n)$.
\end{proposition}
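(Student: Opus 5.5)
We bound the cost of each phase of Algorithm~2 separately and add them. The sort on line~1 uses the composite key $(\pi_j$ descending, $\tau_j$ ascending$)$. Each key comparison is $O(1)$, so any comparison sort such as merge sort produces $\sigma$ in $O(n\log n)$. Computing the scores $\pi_j$ for the interpretable rule or RAS is a lookup of precomputed values (a table $\hat r(s_j,t_j)$ or the held-out $\hat p_j$ from Algorithm~1), so it adds only $O(n)$. The final sum defining $W(P)$ is also $O(n)$, since $|P|\le n$.

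The central step is the data structure on line~2. The idle counts $u_i$ take values in the integer domain $\{0,1,\dots,c_{\max}\}$. We keep, for each value $v$ in this domain, a bucket $B_v$ holding the machines whose idle count is exactly $v$. Each bucket is stored as a doubly linked list (or stack), and each machine keeps a handle to its own bucket node, so a machine can be inserted into or deleted from a bucket in $O(1)$. Over the index set $\{0,\dots,c_{\max}\}$ we build a segment tree, or equivalently a Fenwick tree, that records which buckets are nonempty. This tree supports two operations in $O(\log c_{\max})$: the query ``smallest $v\ge g$ with $B_v\neq\emptyset$'' (a successor query), and a point update whenever a bucket changes between empty and nonempty.

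With this structure, $\operatorname{BestFit}(g_j)$ is a successor query for $v^\star$ followed by taking any machine from $B_{v^\star}$. Any machine in that bucket is correct, because all machines in $B_{v^\star}$ have the same idle count and the rule only requires the smallest idle count. The capacity update $u_{i^\star}\gets u_{i^\star}-g_j$ moves $i^\star$ from $B_{v^\star}$ to $B_{v^\star-g_j}$ and makes at most two point updates to the tree. The test $g_j>c_{\max}$ is $O(1)$. So every iteration of the loop costs $O(\log c_{\max})$, and the $n$ iterations together cost $O(n\log c_{\max})$.

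The step I expect to be the main obstacle is initialization. Filling the buckets with all machines costs $O(|\mathcal M| + c_{\max})$, which is not covered by the stated bound. I will handle this by arguing that the structure is built once per replay as preprocessing, which is the same convention the paper uses when it treats Algorithm~1 as offline. Alternatively, I can assume $|\mathcal M| = O(n\log n)$ in the replay regime. If neither assumption is acceptable, the statement should carry an extra additive $O(|\mathcal M|)$ term, and I would state it explicitly.

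Summing the phases gives $O(n\log n + n\log c_{\max})$. Finally, $c_{\max}$ is a cluster constant: Table~4 reports only a few distinct GPU counts per machine, with $c_{\max}$ on the order of $16$. Hence $\log c_{\max}=O(1)$, the second term is $O(n)$, and the running time is dominated by the $O(n\log n)$ sort, as claimed.
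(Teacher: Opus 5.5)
This is essentially the paper's proof: an $O(n\log n)$ sort, then machines bucketed by idle count over $\{0,\dots,c_{\max}\}$ with a segment tree, so that each best-fit query and decrement costs $O(\log c_{\max})$. Your explicit $O(\lvert\mathcal{M}\rvert + c_{\max})$ initialization cost is a point the paper's proof skips, and in the paper's replay regime you can absorb it without the ad hoc assumptions you list: $\rho \ge 1$ gives $\lvert\mathcal{M}\rvert \le \Gamma \le \sum_j g_j \le n\max_j g_j$, which is $O(n)$ once GPU sizes are treated as constants.

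One caveat you share with the paper: Table~A1 reports fractional requests, so idle counts need not be integers. For the argument to be literally correct, the bucket domain must therefore be taken at the request granularity, or replaced by a balanced search tree keyed on idle count; either choice keeps each operation logarithmic.
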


\begin{proof}
The sort on line~1 is $O(n \log n)$. For placement queries, bucket the machines by idle GPU count: the number of buckets is at most $c_{\max} + 1$ (idle counts are integers in $[0, c_{\max}]$), and maintain a segment tree over the GPU-count domain with $O(c_{\max})$ nodes. Querying ``the machine with the smallest idle GPU count not below $g_j$'' and decrementing that machine's idle count each require one root-to-leaf descent in the segment tree, $O(\log c_{\max})$ each; the loop body totals $O(n \log c_{\max})$. Adding the two terms gives the result. In our clusters $c_{\max}$ takes small values (GPUs per machine is bimodal with only a few distinct values), so $\log c_{\max}$ is much smaller than $\log n$ and the running time is dominated by the sorting term.
\end{proof}

Admission ordering relies on the scores from the Section~4 predictor; Figure~8 shows, on two held-out records, how the score is composed from the nine attributes, illustrating that the ordering basis is a decomposable score rather than a threshold on a single attribute.

\begin{figure*}[t]
\centering
\includegraphics[width=\textwidth]{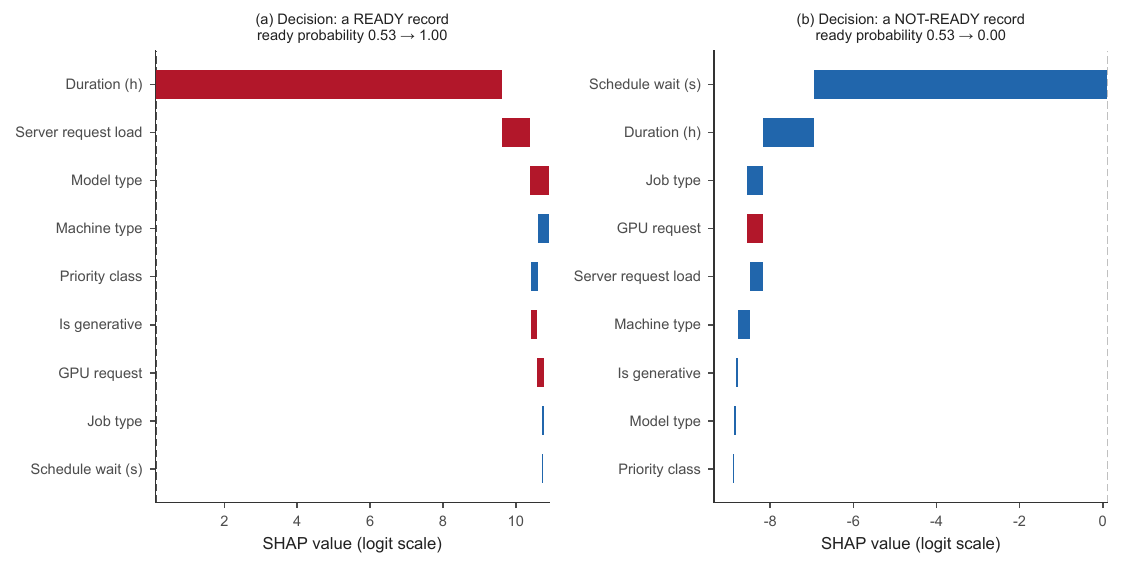}
\caption{(a) and (b) show the SHAP decision process for a ready record and a non-ready record, respectively; red contributions raise the model score and blue ones lower it; the horizontal axis is the SHAP value on the logit scale, and the title gives the predicted readiness of the record.}
\label{fig:8}
\end{figure*}

The score $\pi_j$ has three sources, used to separate whether the gain comes from the model or from a simple rule; the policy set is organized into baselines, interpretable rules, this paper, and the upper bound.

\TABCAP{Table 9. Admission policies and score sources.}

{\def\LTcaptype{none} 
\begin{longtable}[]{@{}
  >{\raggedright\arraybackslash}p{(\linewidth - 6\tabcolsep) * \real{0.1866}}
  >{\raggedright\arraybackslash}p{(\linewidth - 6\tabcolsep) * \real{0.2742}}
  >{\raggedright\arraybackslash}p{(\linewidth - 6\tabcolsep) * \real{0.3886}}
  >{\raggedright\arraybackslash}p{(\linewidth - 6\tabcolsep) * \real{0.1507}}@{}}
\toprule\noalign{}
\begin{minipage}[b]{\linewidth}\raggedright
Block
\end{minipage} & \begin{minipage}[b]{\linewidth}\raggedright
Policy
\end{minipage} & \begin{minipage}[b]{\linewidth}\raggedright
Basis of admission order
\end{minipage} & \begin{minipage}[b]{\linewidth}\raggedright
Held-out ROC-AUC
\end{minipage} \\
\midrule\noalign{}
\endhead
\bottomrule\noalign{}
\endlastfoot
Baseline & FCFS & Arrival order & --- \\
Baseline & Shortest job first & Declared duration ascending & --- \\
Baseline & Smallest request first & Requested GPUs ascending & --- \\
Baseline & Large request first & Requested GPUs descending & --- \\
Baseline & Random ordering & Uniform random key & --- \\
Interpretable rule & Lookup by GPU model and job type & Historical readiness by model \(\times\) job type, descending & 0.6785 \\
Interpretable rule & Lookup by job type and duration & Historical readiness by job type \(\times\) duration bin, descending & 0.6934 \\
This paper & RAS & Readiness probability of the nine-feature gradient-boosted tree, descending & 0.8176 \\
Upper bound & Ordering by true labels & True readiness label descending & --- \\
\end{longtable}
}

\noindent{\footnotesize Note: all policies share the same placement rule (best-fit) and differ only in admission order; ``held-out'' means the score is estimated outside the training set to which the job belongs.}

Table~9 is organized into four blocks. Random ordering is a lower-bound control carrying no information. The lookup by GPU model and job type has a held-out AUC of 0.6785, almost the same as the 0.6789 reported in Section~4; the lookup by job type and duration (0.6934) lies between it and the three-key joint lookup (0.7520). The last row, ordering by true labels, is used only to indicate the upper bound of the improvement space.
Table~10 gives the evaluation metrics. The four main metrics quantify the same thing from the GPU and job perspectives, and the remaining metrics check whether a policy harms other parts of the system.

\TABCAP{Table 10. Definitions of the evaluation metrics.}

{\def\LTcaptype{none} 
\begin{longtable}[]{@{}
  >{\raggedright\arraybackslash}p{(\linewidth - 6\tabcolsep) * \real{0.2649}}
  >{\raggedright\arraybackslash}p{(\linewidth - 6\tabcolsep) * \real{0.5817}}
  >{\raggedright\arraybackslash}p{(\linewidth - 6\tabcolsep) * \real{0.0855}}
  >{\raggedright\arraybackslash}p{(\linewidth - 6\tabcolsep) * \real{0.0680}}@{}}
\toprule\noalign{}
\begin{minipage}[b]{\linewidth}\raggedright
Metric
\end{minipage} & \begin{minipage}[b]{\linewidth}\raggedright
Definition
\end{minipage} & \begin{minipage}[b]{\linewidth}\raggedright
Direction
\end{minipage} & \begin{minipage}[b]{\linewidth}\raggedright
Role
\end{minipage} \\
\midrule\noalign{}
\endhead
\bottomrule\noalign{}
\endlastfoot
Ineffective-occupancy ratio & GPUs under ineffective occupancy $\div$ total cluster GPUs, i.e., $W(P)$ of Definition~2 in Section~3 & lower & main \\
Effective job ratio & Jobs effectively occupied and placed $\div$ jobs a machine can accommodate & higher & main \\
Effective share among placed jobs & Jobs effectively occupied and placed $\div$ placed jobs & higher & main \\
Large-request effective coverage & Share of requests of 8 GPUs or more that are placed and effectively occupied & higher & main \\
Effective jobs per GPU & Jobs effectively occupied and placed $\div$ placed GPUs & higher & secondary \\
Placeable job ratio & Placed jobs $\div$ jobs a machine can accommodate (placement convention) & --- & control \\
Allocation rate / fairness across request-size bins & Placed GPUs $\div$ total cluster GPUs; Jain index of readiness rates across seven request-size bins & higher & control \\
\end{longtable}
}

\noindent{\footnotesize Note: all metrics refer to a single admission run over the same batch of jobs; ``a machine can accommodate'' in the denominators means that the requested GPUs do not exceed the maximum per-machine GPUs in the cluster.}

\subsection{Experimental Setup and Main Results}

\textbf{Data and protocol:} The demand side comes from the execution-summary table of Section~3.2; the modeling sample over the nine features contains 2,006,496 records (a 5\% row sample of the 40,129,894 labeled records), split into five machine-grouped folds with held-out prediction (the AUC in Table~9 is measured on this split), with an overall readiness rate of 0.4729. The supply side uses three clusters together: two parameterized synthetic heterogeneous clusters (reconstructed from the measured GPUs-per-machine shares and rack homogeneity of Section~4.4, with ``empirical replica'' and ``heavy-tailed'' model mixes, respectively) and a real-snapshot cluster (37,707 machines); demand is aggregated in two ways, record-equal weighting (each record counts once) and weighting by requested GPUs. Load $\rho$ is defined as the ratio of the total requested GPUs submitted in an admission to the total cluster GPUs, with five levels from 1.00 to 2.50; each policy runs 3 seeds per cell (random seeds fixed starting at 20261008), and all policies share the same batch of jobs in the same configuration, so comparisons are paired. All code and data-processing scripts are open source at\hspace{0.35em}\url{https://github.com/liaobin665/gpu-ras}.

Table~11 gives the mean of each policy over all 90 experimental configurations. The four main metrics quantify the same thing from the GPU and job perspectives; we also report the cross-cell standard deviation, the paired difference of each policy against FCFS, and its 95\% confidence interval and paired-test $p$ value. Figure~9 plots the same results in three panels.

\needspace{9\baselineskip}
\begin{samepage}
\TABCAP{Table 11. Main results of the admission policies (3 clusters $\times$ 2 aggregation conventions $\times$ five load levels $\times$ 3 seeds, 90 configurations; each configuration contains 9 policies, 810 runs in total).}
\end{samepage}

{\def\LTcaptype{none} 
\begin{longtable}[]{@{}
  >{\raggedright\arraybackslash}p{(\linewidth - 18\tabcolsep) * \real{0.1953}}
  >{\raggedright\arraybackslash}p{(\linewidth - 18\tabcolsep) * \real{0.0745}}
  >{\raggedright\arraybackslash}p{(\linewidth - 18\tabcolsep) * \real{0.0745}}
  >{\raggedright\arraybackslash}p{(\linewidth - 18\tabcolsep) * \real{0.0745}}
  >{\raggedright\arraybackslash}p{(\linewidth - 18\tabcolsep) * \real{0.0745}}
  >{\raggedright\arraybackslash}p{(\linewidth - 18\tabcolsep) * \real{0.0745}}
  >{\raggedright\arraybackslash}p{(\linewidth - 18\tabcolsep) * \real{0.0745}}
  >{\raggedright\arraybackslash}p{(\linewidth - 18\tabcolsep) * \real{0.0745}}
  >{\raggedright\arraybackslash}p{(\linewidth - 18\tabcolsep) * \real{0.1935}}
  >{\raggedright\arraybackslash}p{(\linewidth - 18\tabcolsep) * \real{0.0897}}@{}}
\toprule\noalign{}
\begin{minipage}[b]{\linewidth}\raggedright
Policy
\end{minipage} & \begin{minipage}[b]{\linewidth}\raggedright
Ineffective-occupancy ratio (\%)
\end{minipage} & \begin{minipage}[b]{\linewidth}\raggedright
Effective job ratio (\%)
\end{minipage} & \begin{minipage}[b]{\linewidth}\raggedright
Effective share among placed jobs (\%)
\end{minipage} & \begin{minipage}[b]{\linewidth}\raggedright
Large-request effective coverage (\%)
\end{minipage} & \begin{minipage}[b]{\linewidth}\raggedright
Allocation rate (\%)
\end{minipage} & \begin{minipage}[b]{\linewidth}\raggedright
Fairness across request-size bins (\%)
\end{minipage} & \begin{minipage}[b]{\linewidth}\raggedright
Cross-cell SD
\end{minipage} & \begin{minipage}[b]{\linewidth}\raggedright
Paired difference vs FCFS (95\% CI, pp)
\end{minipage} & \begin{minipage}[b]{\linewidth}\raggedright
$p$ value
\end{minipage} \\
\midrule\noalign{}
\endhead
\bottomrule\noalign{}
\endlastfoot
FCFS & 43.64 & 34.78 & 48.99 & 46.11 & 98.52 & 89.90 & 5.78 & --- & --- \\
Shortest job first & 47.70 & 34.70 & 49.77 & 40.22 & 98.53 & 81.45 & 4.79 & 4.06 {[}2.75, 5.38{]} & \(3\times 10^{-7}\) \\
Smallest request first & 46.09 & 42.44 & 48.41 & 27.98 & 98.50 & 68.29 & 7.67 & 2.46 {[}1.91, 3.00{]} & \(9\times 10^{-9}\) \\
Large request first & 41.11 & 27.58 & 51.35 & 60.40 & 98.54 & 65.66 & 5.31 & -2.53 {[}-3.09, -1.97{]} & \(6\times 10^{-13}\) \\
Random ordering & 43.48 & 34.87 & 49.05 & 46.07 & 98.52 & 89.71 & 5.75 & -0.16 {[}-0.27, -0.05{]} & \(0.003\) \\
Lookup by model and job type & 34.74 & 34.63 & 59.23 & 59.91 & 98.53 & 82.33 & 7.80 & -8.90 {[}-10.02, -7.77{]} & \(3\times 10^{-15}\) \\
Lookup by job type and duration & 33.88 & 36.04 & 59.69 & 58.01 & 98.53 & 87.49 & 8.37 & -9.75 {[}-11.06, -8.44{]} & \(3\times 10^{-15}\) \\
RAS & 27.15 & 39.48 & 65.85 & 63.18 & 98.53 & 87.26 & 11.02 & -16.49 {[}-18.62, -14.36{]} & \(3\times 10^{-15}\) \\
True-label upper bound & 18.31 & 46.21 & 75.31 & 63.41 & 98.52 & 91.81 & 16.68 & -25.32 {[}-28.87, -21.78{]} & \(3\times 10^{-15}\) \\
\end{longtable}
}

\noindent{\footnotesize Note: all policies use best-fit as the placement rule and differ only in admission order; the first four metrics are the main metrics of Section~5.1 and the last three are controls; ``Cross-cell SD'' is the standard deviation (SD) of each metric across all configurations; the paired difference is ``policy $-$ FCFS'' in percentage points, and the paired test uses $n = 90$ batches under the same experimental configuration.}

\begin{figure*}[t]
\centering
\includegraphics[width=\textwidth]{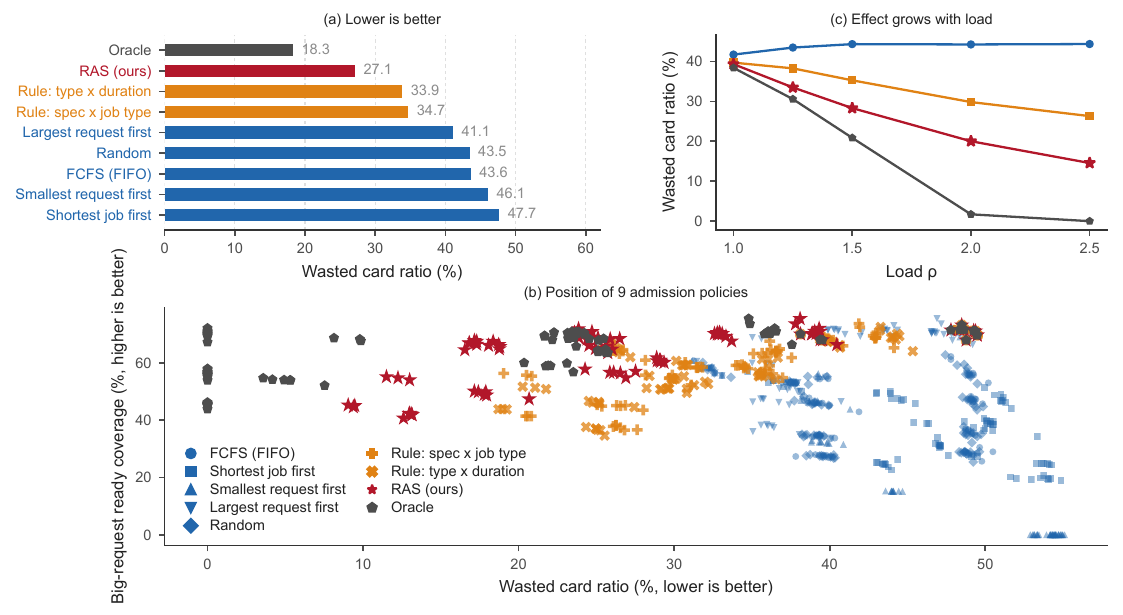}
\caption{(a) Ineffective-occupancy ratio of the nine admission schemes; (b) their positions in the plane of ineffective-occupancy ratio and large-request effective coverage; (c) ineffective-occupancy ratio as a function of load $\rho$.}
\label{fig:9}
\end{figure*}

The main results are threefold.

\textbf{(1) The gain is direct and statistically significant:} Giving admission priority to jobs predicted to become ready directly reduces the GPUs held. RAS lowers the ineffective-occupancy ratio from 43.64\% under FCFS to 27.15\%, a decrease of 16.49 percentage points (95\% CI {[}-18.62, -14.36{]}, $p = 3\times 10^{-15}$); at the same time, the effective share among placed jobs rises from 48.99\% to 65.85\%. That is, for the same set of occupied GPUs, the share actually in use rises markedly, while the total number of occupied GPUs is unchanged (both allocation rates are around 98.53\%). This is not about admitting more jobs but about making placed jobs more likely to actually run. The decrease is statistically significant: the pairing unit is the job batch under the same experimental configuration, and the confidence interval does not cross zero.

\textbf{(2) The gain comes from the model, not from shuffling the order:} The gain depends heavily on the model---simple rules capture only a small fraction of it---and it does not arise from reordering per se. Taking true-label ordering as the theoretical upper bound, RAS captures 65.1\% of the improvement space, whereas the two interpretable rules that use only the contribution-analysis conclusions capture 38.5\% (lookup by job type and duration) and 35.1\% (lookup by GPU model and job type); random ordering yields an ineffective-occupancy ratio of 43.48\%, essentially the same as FCFS's 43.64\%, so without readiness information, changing the order alone brings no gain. This is not in conflict with the contribution analysis: contribution share measures the group-level average contribution of attribute groups to the model's prediction; admission ordering must distinguish, record by record, ``this one will become ready and that one will not'', which relies on the joint discriminative power of the nine attributes. Thus ``three groups dominate'' explains which kinds of attributes the problem concentrates in, but turning it into a usable admission rule requires predicting with all nine attributes jointly, not just the top groups.

\textbf{(3) The gain from admission order grows with load and approaches zero when capacity is ample:} When reading Figure~9(c), focus on the gaps between policies rather than the level of any single curve: the gap between FCFS and RAS grows monotonically from 2.40 percentage points at $\rho = 1.00$ to 29.80 at $\rho = 2.50$ (9.99, 16.02, and 24.24 at $\rho = 1.25$, 1.50, and 2.00, respectively). This is consistent with ``admission order matters only when something must be traded off'': when capacity is sufficient, everything fits regardless of order, so order cannot play a role. The remaining gap to the true-label upper bound is also concentrated at the high-load end: at $\rho = 1.00$, RAS is only 0.92 percentage points from the upper bound, versus 18.31 and 14.56 at $\rho = 2.0$ and 2.5; the remaining gap at $\rho = 2.5$ is smaller than at $\rho = 2.0$ because the true-label upper bound is itself limited by large-request coverage there, and the two policies are nearly tied on the large-request bins.

Two points require explanation. First, the ``effective job ratio'' counts jobs only, and smallest-request-first is higher on it (42.44\%) because it places the largest number of jobs into the cluster; the cost is that it sacrifices almost all large requests, with large-request effective coverage of only 27.98\%, less than half of RAS, and fairness across request-size bins falling from 89.90 to 68.29. ``Highest effective job ratio'' is not the same as optimal; it merely shifts the cost to another class of jobs. RAS has a fairness of 87.26\%, only slightly below FCFS's 89.90\%. Second, the readiness outcomes in this table come from the data; the admission replay experiment changes only the admission order, and the comparison above holds under Assumption~1 of Section~5.1.

\subsection{Ablation and Sensitivity Analyses}

The main results answer whether adjusting the admission order works. This section decomposes the question into four testable ones: does the gain really come from predictor quality; does the conclusion hold under a different placement rule; how does the effect vary with load; and does any request-size bin lose readiness.

\textbf{(1) Predictor-quality sensitivity:} The four policy classes in Table~11 (no information, weak information, joint prediction, information upper bound) themselves constitute this ablation analysis. The ineffective-occupancy ratio decreases along the information order. Random ordering is essentially equivalent to FCFS, showing that the gain is not a side effect of ``shuffling''; the jump from lookup to joint prediction shows that the gain comes from the predictor's ability to use the nine attributes jointly. See Table~11 for the values.

\textbf{(2) Placement-rule sensitivity:} All comparisons above are fixed to best-fit. Replacing the placement rule with first-fit or worst-fit and rerunning the admission schemes gives Table~12.

\TABCAP{Table 12. Placement-rule sensitivity: ineffective-occupancy ratio (\%) under three placement rules.}

{\def\LTcaptype{none} 
\begin{longtable}[]{@{}lllll@{}}
\toprule\noalign{}
Placement rule & FCFS & RAS & True-label upper bound & Gain (pp) \\
\midrule\noalign{}
\endhead
\bottomrule\noalign{}
\endlastfoot
Best-fit & 43.64 & 27.15 & 18.31 & 16.49 \\
First-fit & 43.63 & 27.09 & 18.32 & 16.54 \\
Worst-fit & 40.51 & 29.13 & 23.61 & 11.38 \\
\end{longtable}
}

\noindent{\footnotesize Note: admission order is the same as in Table~11 and only the placement rule changes; ``gain'' is FCFS minus RAS under that placement rule; the three placement rules share the same batch of jobs under the same experimental configuration.}

RAS's gain does not depend on a specific placement rule: under all three rules the difference is positive and in the same direction (16.49, 16.54, and 11.38 percentage points for best-fit, first-fit, and worst-fit); best-fit and first-fit are nearly identical, and the gain shrinks but remains under worst-fit. Worst-fit itself also lowers the ineffective-occupancy ratio, but the gain from changing the admission order persists under all three rules.

\textbf{(3) Load sensitivity:} The conclusion is the same as the third point of Section~5.3.

\textbf{(4) Request-size sensitivity:} Table~13 gives readiness rates for seven request-size bins, to test whether RAS sacrifices any bin.

\needspace{5\baselineskip}
\begin{samepage}
\TABCAP{Table 13. Request-size sensitivity: readiness rates (\%) across seven request-size bins.}
\end{samepage}

{\def\LTcaptype{none} 
\begin{longtable}[]{@{}
  >{\raggedright\arraybackslash}p{(\linewidth - 14\tabcolsep) * \real{0.1250}}
  >{\raggedright\arraybackslash}p{(\linewidth - 14\tabcolsep) * \real{0.1250}}
  >{\raggedright\arraybackslash}p{(\linewidth - 14\tabcolsep) * \real{0.1250}}
  >{\raggedright\arraybackslash}p{(\linewidth - 14\tabcolsep) * \real{0.1250}}
  >{\raggedright\arraybackslash}p{(\linewidth - 14\tabcolsep) * \real{0.1250}}
  >{\raggedright\arraybackslash}p{(\linewidth - 14\tabcolsep) * \real{0.1250}}
  >{\raggedright\arraybackslash}p{(\linewidth - 14\tabcolsep) * \real{0.1250}}
  >{\raggedright\arraybackslash}p{(\linewidth - 14\tabcolsep) * \real{0.1250}}@{}}
\toprule\noalign{}
\begin{minipage}[b]{\linewidth}\raggedright
Policy
\end{minipage} & \begin{minipage}[b]{\linewidth}\raggedright
$\le$ 0.5 GPU
\end{minipage} & \begin{minipage}[b]{\linewidth}\raggedright
(0.5, 1) GPU
\end{minipage} & \begin{minipage}[b]{\linewidth}\raggedright
=1 GPU
\end{minipage} & \begin{minipage}[b]{\linewidth}\raggedright
=2 GPU
\end{minipage} & \begin{minipage}[b]{\linewidth}\raggedright
=4 GPU
\end{minipage} & \begin{minipage}[b]{\linewidth}\raggedright
=8 GPU
\end{minipage} & \begin{minipage}[b]{\linewidth}\raggedright
$\ge$ 16 GPU
\end{minipage} \\
\midrule\noalign{}
\endhead
\bottomrule\noalign{}
\endlastfoot
FCFS & 36.51 & 30.10 & 27.95 & 31.55 & 50.37 & 52.36 & 28.65 \\
RAS & 52.34 & 26.70 & 31.16 & 39.59 & 70.82 & 70.56 & 43.27 \\
\end{longtable}
}

\noindent{\footnotesize Note: readiness rates are record-weighted within each bin; bins are defined by requested GPUs; the two policies share the same batch of jobs under the same experimental configuration.}

Across the seven bins, RAS has a higher readiness rate than FCFS in six, with the largest improvements in the large-request bins (8-GPU: 52.36\% to 70.56\%; 16+ GPU: 28.65\% to 43.27\%), consistent with ``large-request effective coverage from 46.11\% to 63.18\%''. The only exception is the narrow (0.5, 1) GPU bin (30.10\% versus 26.70\%); this bin has the fewest records and carries little weight in the total benefit. The allocation rate of placed GPUs and overall fairness across the seven bins are not harmed (Table~11).

\section{Discussion and Conclusion}

\subsection{Summary}

This paper formalizes ineffective occupancy as a measurable object, measures it at the job level, and proposes a readiness-aware scheduling algorithm on that basis. The main findings are as follows.
Ineffective occupancy is predictable and has a clear structure. Under the machine-grouped five-fold protocol, readiness prediction reaches an ROC-AUC of 0.8187, better than the majority-class baseline (0.4960) and the lookup baselines. The contribution analysis shows that the model's score is driven mainly by job profile, duration, and GPU model, with request size contributing least. Ineffective occupancy is spatially concentrated: measured by the variance ratio $D$ of Eq.~(3), the observed variance of per-machine readiness rates is far larger than the variance expected from a binomial distribution at the same sample size (values of $D$ in Figure~7).
Readiness-aware scheduling substantially reduces ineffective occupancy. With placement fixed and only the admission order adjusted, RAS lowers the ineffective-occupancy ratio from 43.64\% under FCFS to 27.15\% and raises large-request effective coverage from 46.11\% to 63.18\%, without harming capacity utilization. The gain comes mainly from the predictor's discriminative power rather than from a simple order perturbation: compared with the theoretical optimum when every job's true readiness label is known, RAS achieves 65.1\% of the improvement potential, whereas the two interpretable rules derived from the contribution analysis achieve only 38.5\% and 35.1\%.

The work can be summarized in three respects.

\textbf{(1) Existing work focuses on cluster-level allocation and fragmentation rates:} We move the unit of analysis down to individual jobs, reveal the phenomenon of ``GPU capacity is sufficient but the job never becomes ready'', and measure its scale and distribution at the job level.

\textbf{(2) Using only the nine attributes observable at admission time:} We build a readiness predictor and achieve good ranking performance under a machine-grouped five-fold split. The contribution analysis shows that ineffective occupancy is not uniformly distributed random noise but a structural phenomenon organized along job-side attributes such as job profile, duration, and GPU model; it can therefore serve as a stable basis for admission scoring.

\textbf{(3) Given that the capacity constraint on the placement side is not tight and job attributes cannot be changed:} We narrow the adjustable variable to the admission order and propose an admission-ordering rule based on predicted readiness probability. Experiments show that the rule stably reduces ineffective occupancy across clusters and loads, and the allocation rate stays essentially unchanged (98.52\% vs 98.53\%).

In summary, readiness status should become an explicit consideration for heterogeneous GPU cluster schedulers, just like placement success rate.

\subsection{Limitations and Future Work}

The applicability boundaries of our conclusions are as follows.
The readiness label is an observable indicator of readiness status. The release notes define the field as ``whether the pod has ever been ready'', and we use it accordingly as an observable indicator of whether a job can enter the ready state. We also exclude a known alternative explanation: ASI's ``Standby'' mechanism allows users to voluntarily yield compute, a state recorded by the \texttt{state\_public} field of the pod-hour table, which is not in the execution-summary table we use; hence ``not ready'' as defined here does not include voluntary yielding.

Extrapolation requires checking the supply and demand structures. Our conclusions come from observations of a single production cluster. Before extrapolating to other heterogeneous clusters, one should check whether the supply structure (GPUs-per-machine distribution, rack homogeneity) and the demand structure (request-size distribution) are similar.
CPU is not examined separately at the mechanism level. We treat CPU cores per GPU as supply-side structural background (its range is shown in Figure~2) and do not include CPU requests in the model.
The replay evaluation assumes that readiness outcomes do not change with admission order. In reality, image-cache state (warm vs. cold), concurrent startup contention on the same machine, and network congestion can all cause the same job to obtain different readiness outcomes under different admission orders; end-to-end effects require online deployment tests.

Future work can proceed in several directions: first, reproduce our conclusions on other heterogeneous clusters with similar supply and demand structures to test extrapolation; second, include CPU requests in the model and examine the effect of joint CPU--GPU matching on readiness; third, explore finer-grained admission-ordering policies, for example, jointly optimizing predicted readiness, job waiting time, and priority in a multi-objective formulation.

\clearpage
\appendix
\section{Description and Statistical Distributions of the Nine Observable Attributes}

This appendix completes the value ranges and statistical distributions of the nine observable attributes in Table~3 of Section~3.2, for reproduction and verification. Table~A1 gives descriptions and distribution highlights of the nine attributes, and Figure~A1 gives the corresponding statistical distributions.

\TABCAP{Table A1. Description and descriptive statistics of the nine observable attributes.}

{\def\LTcaptype{none} 
\begin{longtable}[]{@{}
  >{\raggedright\arraybackslash}p{(\linewidth - 8\tabcolsep) * \real{0.1750}}
  >{\raggedright\arraybackslash}p{(\linewidth - 8\tabcolsep) * \real{0.1500}}
  >{\raggedright\arraybackslash}p{(\linewidth - 8\tabcolsep) * \real{0.1250}}
  >{\raggedright\arraybackslash}p{(\linewidth - 8\tabcolsep) * \real{0.2800}}
  >{\raggedright\arraybackslash}p{(\linewidth - 8\tabcolsep) * \real{0.2700}}@{}}
\toprule\noalign{}
Feature & Meaning & Values & Distribution highlights & Relation to readiness \\
\midrule\noalign{}
\endhead
\bottomrule\noalign{}
\endlastfoot
\texttt{gpu\_spec\_public} & Accelerator model (request side) & Categorical, 13 classes & Most frequent XPU-E 28.5\%; top three together 57.9\%; smallest class A30 only 0.04\% & Readiness 30.0 (XPU-C) to 58.4 (XPU-E), a range of 28.4 pp \\
\texttt{priority\_class} & Priority class & Categorical, 3 classes & LP 82.8\%, HP 16.1\%, Other 1.1\% & Readiness 44.1 (LP) to 79.9 (Other), a range of 35.8 pp \\
\texttt{job\_type\_public} & Job type & Categorical, 6 classes & offline\_inference 79.7\%, online\_inference 12.3\%, training 7.4\% (remaining 3 classes together 0.6\%) & Readiness 42.4 (offline\_inference) to 91.6 (training), a range of 49.2 pp \\
\texttt{model\_type\_public} & Model type & Categorical, 6 classes & genai 79.4\%, cv 8.9\%, rec 7.0\% (remaining 3 classes together 4.7\%) & Readiness 43.0 (genai) to 86.8 (rec), a range of 43.8 pp \\
\texttt{is\_genai\_request} & Generative-request flag & Binary 0/1 & 77.8\% of records take 1 & Readiness 64.9 at 0 and 42.3 at 1, a difference of 22.6 pp \\
\texttt{gpu\_request} & Requested GPUs & Continuous, including fractional values & Median 1, P90 2; 11.4\% below 1, 0.72\% at 16 or above & Readiness 58.7 at the low end and 56.6 at the high end, a difference of 2.1 pp \\
\texttt{duration\_hours} & Declared duration (hours) & Continuous, heavy-tailed & Median 0.42, mean 5.20, P90 4.65, maximum 1,824 & Readiness 56.1 in the short bin and 64.3 in the long bin, a difference of 8.1 pp \\
\texttt{schedule\_delay\_sec} & Scheduling delay (seconds) & Continuous, more than half zero & Median 0, 55.3\% are 0, P90 224 & Readiness 44.7 at the low end and 53.8 at the high end, a difference of 9.1 pp \\
\texttt{server\_req\_count} & Concurrent requests on the same machine & Continuous & Median 145, mean 192, P90 380 & Readiness 37.6 at the low end and 54.1 at the high end, a range of 16.5 pp \\
\end{longtable}
}

\noindent{\footnotesize Note: statistics are computed on the modeling sample (2,006,496 records from 32,635 machines) and differ from the full-sample conventions of Section~3.2 by less than one percent; ``low'' and ``high'' refer to values at or below the 10th percentile and at or above the 90th percentile, respectively; for categorical attributes only categories with at least 1\% of records are listed; all nine attributes have a missing rate of 0.}

\renewcommand{\thefigure}{A\arabic{figure}}\setcounter{figure}{0}
\begin{figure}[!ht]
\centering
\includegraphics[width=\textwidth]{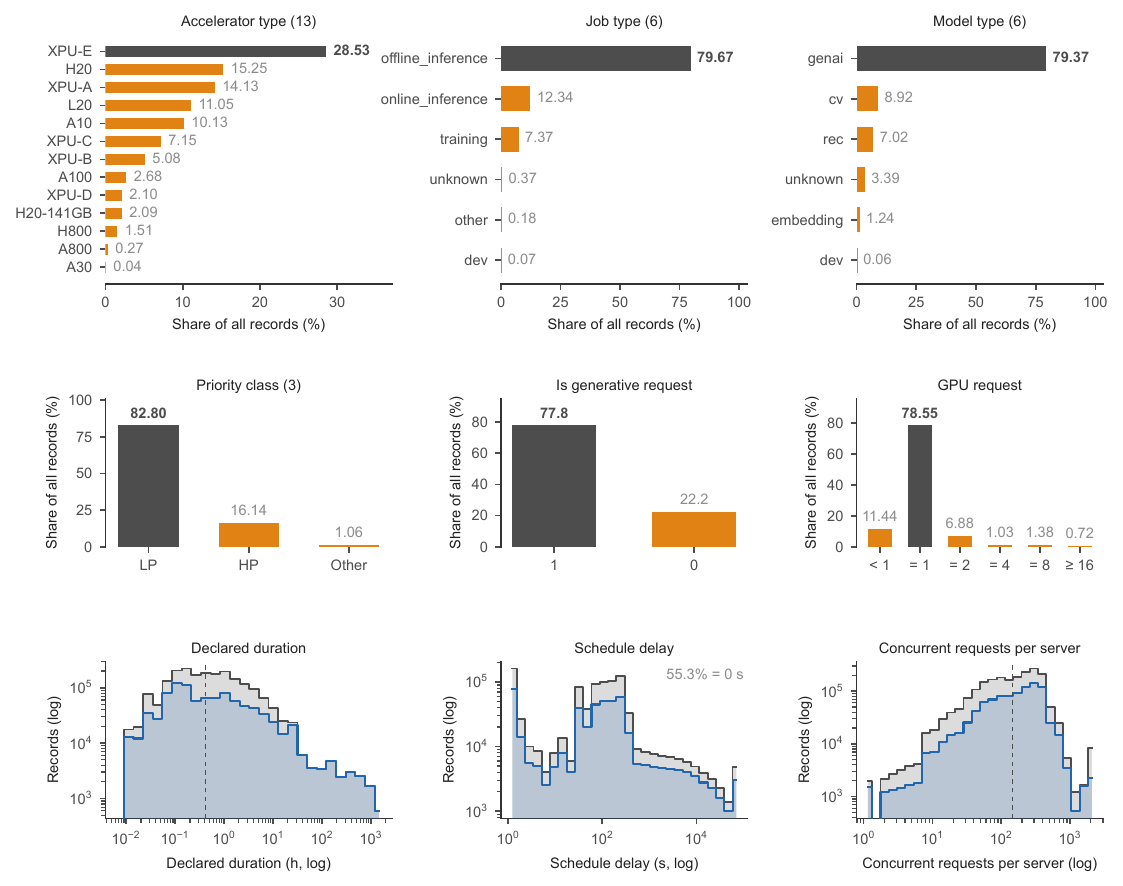}
\caption{Statistical distributions of the nine observable attributes. (a)--(f) show the share of records by value; in (g)--(i), light color denotes all records and dark color the ready records, the horizontal axis is logarithmic, and the dashed line marks the median.}
\label{fig:A1}
\end{figure}

Table~A1 and Figure~A1 together give three facts relevant to the main line of this paper.
First, the demand side is highly concentrated: the distributions of job type, model type, and the generative-request flag are nearly identical: offline inference accounts for about 80\% of records by job type, and generative requests account for a similar share, indicating that the trace is dominated by offline generative inference.
Second, request sizes are dominated by a single GPU, but fractional requests exist (11.4\% below one), so requested GPUs are not strictly integer.
Third, and most relevant here, the difference in readiness rates across the nine attributes is not reflected in request size (a range of only 2.1 percentage points) but is concentrated in job type, model type, and GPU model (ranges of 28.4 to 49.2 percentage points). This is consistent with the feature-group contribution analysis of Section~4.3: what drives whether a job becomes ready is job profile and GPU model, not how many GPUs it requests.

\end{document}